\documentclass[a4paper,UKenglish,cleveref, autoref, thm-restate]{lipics-v2021}

\pdfoutput=1
\hideLIPIcs
\usepackage{amsmath}
\usepackage{amsthm}
\usepackage{amssymb}
\usepackage{amsfonts}
\usepackage{graphicx}
\usepackage{makecell}
\usepackage{bm}
\usepackage{dcolumn}
\usepackage{float}
\usepackage{multirow}
\usepackage{braket}
\usepackage{color}
\usepackage{tcolorbox}
\usepackage{algorithm}
\usepackage{algorithmic}

\usepackage{tikz}
\usetikzlibrary{quantikz2}
\usepackage{appendix}
\usepackage{mathtools}
\usepackage{threeparttable}
\DeclarePairedDelimiter\rbra{\lparen}{\rparen}
\DeclarePairedDelimiter\sbra{\lbrack}{\rbrack}
\DeclarePairedDelimiter\cbra{\{}{\}}
\DeclarePairedDelimiter\abs{\lvert}{\rvert}
\DeclarePairedDelimiter\Abs{\lVert}{\rVert}
\DeclarePairedDelimiter\ceil{\lceil}{\rceil}
\DeclarePairedDelimiter\floor{\lfloor}{\rfloor}

\DeclarePairedDelimiterX{\pt}[1](){#1} 

\newcommand{\bo}{O\pt}

\newcommand{\om}{\Omega\pt}

\newcommand{\ta}{\Theta\pt}

\bibliographystyle{plainurl}

\title{Nearly Optimal Circuit Size for Sparse Quantum State Preparation} 




\author{Lvzhou Li}{Institute of Quantum Computing and Software, School of Computer Science and Engineering, Sun Yat-sen University, Guangzhou 510006, China}{lilvzh@mail.sysu.edu.cn}{https://orcid.org/0000-0001-5941-7036}{}

\author{Jingquan Luo}{Institute of Quantum Computing and Software, School of Computer Science and Engineering, Sun Yat-sen University, Guangzhou 510006, China}{luojq25@mail2.sysu.edu.cn}{https://orcid.org/0009-0003-7757-9858}{}

\authorrunning{L. Li and J. Luo} 

\Copyright{Lvzhou Li and Jingquan Luo} 

\funding{Lvzhou Li and Jingquan Luo were supported by the National Key Research and Development Program of China (Grant No.2024YFB4504004), the National Natural Science Foundation of China (Grants No.92465202 and No.62272492), the Guangdong Provincial Quantum Science Strategic Initiative (Grant No.GDZX2303007), and the Guangzhou Science and Technology Program (Grant No.2024A04J4892).}


\begin{CCSXML}
<ccs2012>
   <concept>
       <concept_id>10003752.10003777.10003781</concept_id>
       <concept_desc>Theory of computation~Circuit complexity</concept_desc>
       <concept_significance>500</concept_significance>
       </concept>
   <concept>
       <concept_id>10003752.10003753.10003758</concept_id>
       <concept_desc>Theory of computation~Quantum computation theory</concept_desc>
       <concept_significance>500</concept_significance>
       </concept>
 </ccs2012>
\end{CCSXML}

\ccsdesc[500]{Theory of computation~Circuit complexity}
\ccsdesc[500]{Theory of computation~Quantum computation theory}

\keywords{Quantum computing, quantum state preparation, circuit complexity } 

\nolinenumbers

\begin{document}

\maketitle

\begin{abstract}
Quantum state preparation is a fundamental and significant  subroutine in quantum computing. In this paper, we conduct a systematic investigation on the circuit size (the total count of elementary gates in the circuit) for sparse quantum state preparation. A quantum state is said to be $d$-sparse if it has only  $d$ non-zero amplitudes. 
For the task of preparing an $n$-qubit $d$-sparse quantum state, we obtain the following results:
\begin{itemize}
   \item \textbf{Without ancillary qubits:}  
   Any $n$-qubit $d$-sparse quantum state can be prepared by a quantum circuit of size $O(\frac{nd}{\log n} + n)$ without using ancillary qubits, which improves the previous best results. It  is asymptotically optimal when $d = \mathrm{poly}(n)$, and this optimality holds for a broader scope under some reasonable assumptions.
 \item \textbf{With limited ancillary qubits:} (i) Based on the first result, we prove for the first time a trade-off between the number of ancillary qubits and the circuit size: any $n$-qubit $d$-sparse quantum state can be prepared by a quantum circuit of size $O(\frac{nd}{\log (n + m)} + n)$ using $m$ ancillary qubits for any $m \in O(\frac{nd}{\log nd} + n)$. 
  (ii) We establish a matching lower bound  $\Omega(\frac{nd}{\log {(n + m)} }+ n)$ under some reasonable assumptions, and  obtain a slightly weaker lower bound $\Omega(\frac{nd}{\log {(n + m)} + \log d} + n)$ without any assumptions.

 \item   \textbf{With unlimited ancillary qubits:} Given arbitrary amount of ancillary qubits available, the circuit size for preparing $n$-qubit $d$-sparse quantum states is $\Theta(\frac{nd}{\log nd} + n)$.
\end{itemize}

\end{abstract}

\section{Introduction}

Since the inception of quantum computing \cite{feynman2018simulating}, an increasing number of quantum algorithms have emerged, exhibiting acceleration advantages over classical algorithms. Notable examples include  Shor's algorithm \cite{shor1994algorithms}, Grover's algorithm \cite{grover1996fast}, HHL algorithm \cite{harrow2009quantum}, and Hamiltonian simulation algorithms \cite{childs2018toward, low2017optimal, low2019hamiltonian, berry2015simulating}, as well as quantum machine learning algorithms \cite{lloyd2014quantum, kerenidis2016quantum, kerenidis2019q, kerenidis2021quantum, rebentrost2014quantum}. Within these algorithms, the preparation of quantum states plays a pivotal role. Usually, the first and inevitable step to process classical data by quantum algorithms is to encode the data into quantum states. If this step consumes substantial resources, it may offset the acceleration advantages of quantum algorithms. Several works have indicated that the substantial accelerations achieved by quantum machine learning stem from the underlying input assumptions \cite{tang2019quantum, chia2019quantum, tang2021quantum, chia2022sampling, gilyen2022improved}. Therefore, how to efficiently prepare quantum states is a fundamental and significant issue in the field of quantum computing.

Before delving into the discussion of quantum state preparation, it is essential to specify the elementary gate set and complexity metrics. In this paper, we adopt the most common gate set, which consists of single-qubit gates and CNOT gates, enabling the accurate implementation of any unitary transformation~\cite{barenco1995elementary}. Additionally, for convenience, the Toffoli gate is permitted, which can be constructed using 10 single-qubit gates and 6 CNOT gates \cite{nielsen2001quantum}. Various standards can be employed to gauge the efficiency of a quantum circuit, including size (the total count of elementary gates), depth (the number of layers such that gates in the same layer can be run in parallel), space (the number of ancillary qubits), and more. Given the considerably higher implementation cost of CNOT gates compared to single-qubit gates, the count of CNOT gates is also utilized as a measure of algorithmic efficiency.

When the quantum state to be prepared does not possess any structural features, we refer to it as the general quantum state preparation problem. This problem has already been extensively researched, and current preparation algorithms have achieved the optimal circuit size $\ta{2^n}$~\cite{grover2002creating,shende2005synthesis,bergholm2005quantum,plesch2011quantum,iten2016quantum}.
Given the high cost of preparing general quantum states and the fact that, in practical applications, data often exhibits special structures, there has been considerable literature focusing on the preparation of special quantum states. These include states whose amplitudes are given by a continuous function \cite{holmes2020efficient, gonzalez2024efficient, rattew2022preparing, marin2023quantum, mcardle2022quantum, moosa2023linear}, states whose amplitudes are accessed through a black-box oracle \cite{sanders2019black, bausch2022fast, wang2021fast}, and states under the low-rank assumption~\cite{araujo2023low}. Another class of quantum states that holds both practical and theoretical significance is sparse quantum states \cite{gleinig2021efficient, malvetti2021quantum, ramacciotti2023simple, mozafari2022efficient, de2020circuit, de2022double, zhang2022quantum, sun2023asymptotically, mao2024towards}. 

An $n$-qubit $d$-sparse quantum state refers to an $n$-qubit quantum state with  $d$ non-zero amplitudes. Given two positive integers $n$ and $0 < d \leq 2^n$, along with a set
\begin{align}
    \mathcal{P} =\cbra*{\rbra{\alpha_i, q_i}}_{0 \leq i \leq d-1}
\end{align}
such that $\alpha_i \in \mathbb{C}$, $q_i \in \cbra{0, 1}^n$ for all $0 \leq i \leq d-1$, $\sum_i |\alpha_i|^2 = 1$ and $q_i \neq q_j$ for all $i \neq j$, the aim of  sparse quantum state preparation (SQSP) is to generate a quantum circuit which acts on $n + m$ qubits and implements a unitary operator $U$ satisfying
\begin{align}
    U \ket{0}^{\otimes n}\ket{0}^{\otimes m} = \sum_{i = 0}^{d-1} \alpha_i \ket{q_i} \ket{0}^{\otimes m},
\end{align}
where $m \geq 0$ is an integer and the last $m$ qubits serve as ancillary qubits.

The importance of SQSP is self-evident, and the motivation is twofold:
\begin{itemize}
\item  First, many practically relevant states in quantum computing and processing have the property that only a small proportion of the basis states have nonzero coefficients.
 Some prominent examples of sparse states are W states, GHZ states,  generalized Bell states, and thermofield double states \cite{cottrell2019build}. 

\item Second,  SQSP offers a finer-grained perspective on quantum state preparation. For the general quantum state preparation problem, the circuit size is proven to be $\ta{2^n}$, which does not consider the dependence on the parameter $d$. In other words, it simply assumes $d=2^n$. A more fine-grained complexity should consider this dependence, and recover the general case when $d=2^n$.
\end{itemize}

\subsection{Contributions} 
In this paper, we conduct a systematic investigation of the circuit size of sparse quantum state preparation, exploring the three scenarios:  without, with limited, and with unlimited ancillary qubits. The complexity notations $O$, $\Omega$, $o$ and $\omega$ will be explained in detail in Section \ref{section:pre}. 

\subsubsection{SQSP without Ancillary Qubits}First, we consider the scenario of not using ancillary qubits.

\begin{theorem}\label{thm:withoutancillary}
   Any $n$-qubit $d$-sparse quantum state can be prepared by a quantum circuit of size $O(\frac{nd}{\log n} + n)$ without using ancillary qubits.
\end{theorem}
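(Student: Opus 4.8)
The plan is to reduce the preparation of an arbitrary $n$-qubit $d$-sparse state to two subroutines that cost $O(d)$ and $O(dn/\log n)$ respectively. Set $s=\lceil\log d\rceil$. First I would prepare, on the first $s$ qubits, the dense $s$-qubit state $\sum_{i=0}^{d-1}\alpha_i\ket{i}$ (tensored with $\ket{0^{n-s}}$ on the remaining qubits), identifying the compressed index $i$ with the target string $q_i$; by the optimal general state-preparation result this costs $O(2^s)=O(d)$ gates and, since $s\le n$, uses only the $n$ working qubits. It then remains to apply an in-place, ancilla-free reversible ``address-expansion'' unitary $V$ satisfying $V\ket{i}\ket{0^{n-s}}=\ket{q_i}$ for every $i$, extended arbitrarily to a permutation on $\{0,1\}^n$; then $V\big(\sum_i\alpha_i\ket{i}\ket{0^{n-s}}\big)=\sum_i\alpha_i\ket{q_i}$ is exactly the desired state.

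The heart of the argument is to implement $V$ (equivalently its inverse, a compression map $q_i\mapsto i$) in size $O(dn/\log n)$ without ancilla. I would partition the $n$ output coordinates into $O(n/\log n)$ consecutive blocks of width $b\approx\log n$ and build $V$ as a composition of per-block transformations, each of which, controlled on the current register contents, writes the correct $b$-bit pattern of $q_i$ into its block. The crucial observation is that a block of $b=\log n$ qubits admits only $2^b=n$ possible patterns, so across the $d$ addresses each block realizes a function $\{0,1\}^{s}\to\{0,1\}^b$ whose range has size at most $n$. Grouping the $d$ indices by their target block-pattern and using a Lupanov-style table construction inside the block, I expect each block to be realizable in $O(d)$ gates rather than the naive $O(db)$, because the expensive, minterm-generating part of the construction is shared among the $b$ output bits of the block. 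Summing $O(d)$ over the $O(n/\log n)$ blocks yields $O(dn/\log n)$, and the additive $O(d)$ amplitude cost together with an $O(n)$ overhead for handling the $s$-qubit index and any linear fix-up gives the claimed $O(\frac{dn}{\log n}+n)$.

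The main obstacle, and the reason the bound is $\log n$ rather than the optimal $\log(dn)$ attainable with ancilla, is the ancilla-free, in-place requirement. With scratch space one could apply Lupanov's construction to the full operator $\{0,1\}^s\to\{0,1\}^n$ at once, sharing minterm computation across all $n$ outputs and obtaining $O(dn/\log(dn))$; without ancilla the sharing can only be carried out within a single block, which caps the savings at the block width $\log n$. Concretely, the delicate points I would have to handle are: (i) ensuring $V$ is a genuine permutation and that each block-writing step is reversible while the index register is progressively consumed, so that no garbage is left and no workspace is needed; (ii) verifying that the intermediate patterns produced block-by-block never collide in a way that corrupts the addressing, which is where the distinctness of the $q_i$ and a careful ordering of the blocks are used; and (iii) checking that the within-block table construction genuinely uses only the $b$ block qubits plus the already-present index register as controls. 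I would finally assemble the circuit as amplitude preparation followed by $V$ and bound its total size by the sum above.
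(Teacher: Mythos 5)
Your high-level framework is the same as the paper's: prepare the dense $\lceil\log d\rceil$-qubit state $\sum_i\alpha_i\ket{i}$ in $O(d)$ gates, then apply an ancilla-free permutation of $\{0,1\}^n$ extending $\ket{i}\ket{0^{n-s}}\mapsto\ket{q_i}$. The gap is in the only hard part: your implementation of that permutation in size $O(dn/\log n)$. Your central claim---that each width-$b$ block ($b\approx\log n$) can be written in place in $O(d)$ gates by sharing minterm computation across the $b$ output bits---is unsupported, and in the form stated it is false. The restriction of the $q_i$'s to one block is an arbitrary function $f_j:\{0,\dots,d-1\}\to\{0,1\}^b$ (distinctness of the $q_i$ constrains whole rows, not single blocks), so there are $2^{db}=n^{d}$ distinct block transformations; but the number of circuits of size $T$ over a finite gate set acting only on the index register plus the block, i.e.\ on $O(\log(dn))$ qubits, is $2^{O(T\log\log (dn))}$ (and the discretization argument in \cref{thm:lb1} extends this to continuous rotations). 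Hence $T=\Omega\bigl(d\log n/\log\log(dn)\bigr)$, which is $\omega(d)$ precisely when $d=\mathrm{poly}(n)$---the regime where the theorem's bound is optimal. Moreover, even granting free minterms, sharing them across the $b$ bits of a block only amortizes the \emph{control} cost; the fan-out CNOTs that actually write the patterns still total $\Theta(dn)$ over all blocks, and grouping indices by block pattern instead shifts the cost to computing the group indicators, giving $O(d\log d)$ per block, i.e.\ $O(dn)$ overall for $d=\mathrm{poly}(n)$. Genuine Lupanov-style sharing needs workspace in which shared subfunctions are simultaneously stored, which is exactly what the ancilla-free setting forbids. Letting the block circuit borrow not-yet-written zero qubits weakens the counting obstruction, but you never describe how to exploit them, and that resource vanishes exactly where your acknowledged difficulty (i) sits: after the other blocks are filled there are no clean qubits left, and the in-place conversion of the index register is itself a residual sparse-permutation problem of the same type, which you flag but do not solve.

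The paper obtains the $\log n$ savings by a different mechanism, which is the idea your proposal is missing. Algorithm~\ref{alg:withoutancillary} arranges the expansion permutation $\sigma$ to be a product of at most $d$ \emph{pairwise disjoint transpositions} $(k,q_i)$, and \cref{lemma:zaka} implements any such $\sigma$ in $O(\frac{\texttt{size}(\sigma)n}{\log n}+n)$ gates by batching $m=\Theta(\log n)$ disjoint transpositions at a time: the $2m$ points of a batch, viewed as a $2m\times n$ Boolean matrix, have at most $2^{2m}\le\sqrt{n}$ distinct columns, so CNOT-based column reduction (zeroing duplicate columns), a few swaps and X gates, and $O(m)$ multi-controlled Toffolis of width $O(\log m)$ conjugate the batch, in $O(n)$ total gates, to the fixed canonical permutation $(0,1)\circ(2,3)\circ\cdots\circ(2m-2,2m-1)$, which also costs $O(n)$. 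Each batch of $\Theta(\log n)$ transpositions thus costs $O(n)$ rather than $O(n\log n)$, and summing over $O(d/\log n)$ batches gives the theorem. Without this column-duplication/conjugation argument (or a substitute of comparable strength), your construction as described achieves at best $O(dn)$.
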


  Previously, the best result without using ancillary qubits  \cite{gleinig2021efficient, malvetti2021quantum} can only achieve the circuit size $O(nd)$. We demonstrate for the first time that any $n$-qubit $d$-sparse quantum state can be prepared by a circuit of size $o(nd)$ without using ancillary qubits.
Moreover, in \cref{thm: umlimited} we will see that the upper bound $O(\frac{nd}{\log n} + n)$ is asymptotically optimal when $d = \mathrm{poly}(n)$,  and in \cref{thm:lb2} we get the lower bound $\Omega(\frac{nd}{\log n} + n)$ when $m = 0$, which indicates that the upper bound is asymptotically optimal when $d \leq 2^{\delta n}$ for a sufficiently small constant $\delta \in \rbra{0, 1}$ under reasonable assumptions.

\subsubsection{\texorpdfstring{SQSP with $m$ Ancillary Qubits}{SQSP with m Ancillary Qubits}} 
Next, we consider the scenario of using $m$ ancillary qubits. To the best of our knowledge, we are the first to consider and demonstrate the trade-off between the number of ancillary qubits and the circuit size of SQSP. Based on \cref{thm:withoutancillary}, we obtain the following theorem:

\begin{theorem}\label{thm:tradeoff}
   For any $m \in \bo{\frac{nd}{\log nd}+n}$, any $n$-qubit $d$-sparse quantum state can be prepared by a quantum circuit of size $\bo{\frac{nd}{\log (m+n)} + n}$ using $m$ ancillary qubits.
\end{theorem}

\begin{remark}\label{remark:ub-on-ancillary}
    In \cref{thm:tradeoff},   when $m = \ta{\frac{nd}{\log nd}+n}$, we get  $\bo{\frac{nd}{\log nd}+n}$ on the circuit size. Actually, more ancillary qubits than $\ta{\frac{nd}{\log nd}+n}$ are useless for reducing the upper bound $\bo{\frac{nd}{\log nd}+n}$. Otherwise, it would contradict the fact that the number of ancillary qubits must be asymptotically less than or equal to the circuit size. Since only one-qubit gates and CNOT gates are considered as elementary quantum gates here, if the number of ancillary qubits is more than twice as much as the circuit size, there must exist unused ancillary qubits. Therefore, without loss of generality, we can assume that $m \in \bo{\frac{nd}{\log nd}+n}$.
\end{remark}

We demonstrate that the upper bound established in \cref{thm:tradeoff} is tight under reasonable assumptions. 
Notice that all sparse quantum state preparation methods proposed so far use at most $\bo{d}$ arbitrary-angle single-qubit rotation gates (i.e., $R_x(\theta)$, $R_y(\theta)$, $R_z(\theta)$). For example, in permutation-based  algorithms~\cite{malvetti2021quantum, ramacciotti2023simple}, arbitrary-angle single-qubit rotation gates are only used in the first step to prepare a $ \lceil\log d\rceil $-qubit quantum state, while in the second step, implementing a permutation only involves CNOT gates and $\bo{n}$ specific types of single-qubit gates. We believe that using $\omega(d)$ arbitrary-angle single-qubit rotation gates is pointless. 

\begin{theorem}\label{thm:lb2}
    Suppose $d \leq 2^{\epsilon n}$ for a sufficiently small constant $\epsilon \in \rbra{0, 1}$, and given $m$ ancillary qubits available, if an algorithm $\mathcal{A}$ for preparing $n$-qubit $d$-sparse quantum states satisfies the following conditions:
    \begin{enumerate}
        \item $\mathcal{A}$ uses at most $\bo{d}$ single-qubit rotation gates with arbitrary angles,
        \item all other single-qubit gates amount to a total of $O(n)$ types,
    \end{enumerate}
 then $\mathcal{A}$ requires $\Omega(\frac{nd}{\log(m+n)} + n)$ elementary quantum gates in the worst case.
\end{theorem}

More specifically, excluding rotation gates with arbitrary angles, the single-qubit gates considered in \cref{thm:lb2} include NOT gates and Phase$(\pm \pi/2^j)$ gates for $1 \leq j \leq n$, where the latter are useful for implementing an $n$-Toffoli gate without ancillae \cite{gidney2015}.

\begin{remark}
    When only $m = \mathrm{poly}(n)$ ancillary qubits are available, the lower bound in \cref{thm:lb2} is $\om{\frac{nd}{\log n}+n}$, which can be achieved without ancillary qubits as shown in \cref{thm:withoutancillary}. Therefore, $\mathrm{poly}(n)$ ancillary qubits do not help in reducing the circuit size of SQSP under some reasonable assumptions.
\end{remark}

Without any assumptions, we prove a slightly weaker lower bound on the circuit size as follows.
\begin{theorem}\label{thm:lb1}
   Given $m$ ancillary qubits available, there exist $n$-qubit $d$-sparse quantum states such that any algorithm to prepare them requires $\om{\frac{nd}{\log(m + n) + \log d} + n}$ elementary quantum gates.
\end{theorem}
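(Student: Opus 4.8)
The plan is to establish the two summands separately: the additive $\Omega(n)$ by a light-cone argument, and the main term $\Omega\!\left(\frac{dn}{\log(m+n)+\log d}\right)$ by a covering/counting argument that pits the number of states reachable by a small circuit against a large, well-separated family of $d$-sparse states. Throughout I write $N=n+m$ for the total number of qubits. For the $\Omega(n)$ term I would consider any $d$-sparse target whose support contains the basis state $\ket{1^n}$. If some principal qubit were never acted on by a gate it would remain in $\ket 0$ throughout, so the output could not carry amplitude on $\ket{1^n}$; hence all $n$ principal qubits must be touched. Since each elementary gate (single-qubit, CNOT, Toffoli) touches at most three qubits, any preparing circuit uses at least $n/3$ gates, independently of $m$.

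For the main term I would fix a hard family $\mathcal F$ of states $\frac{1}{\sqrt d}\sum_{q\in S}\ket q$ whose supports $S$ range over a constant-weight code, i.e.\ a family of $d$-subsets of $\{0,1\}^n$ with pairwise symmetric difference at least $d/2$. Two such states have inner product $|S\cap S'|/d\le 3/4$, hence Euclidean distance at least $1/\sqrt2$, so they are $\Omega(1)$-separated (and tensoring each with a fixed $\ket 0^{\otimes m}$ ancilla preserves these distances). A greedy/Gilbert--Varshamov estimate, comparing $\binom{2^n}{d}\ge 2^{d(n-\log d)}$ against the size $2^{nd/4+O(d)}$ of a symmetric-difference ball of radius $d/4$, gives $|\mathcal F|\ge 2^{\Omega(d(n-\log d))}$, which is $2^{\Omega(dn)}$ throughout the regime $\log d=o(n)$ (more generally $d\le 2^{(1-c)n}$).

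Next I would upper bound, to a fixed precision $\epsilon_0=1/4$, the number of states $U\ket 0^{\otimes N}$ reachable by a size-$s$ circuit on $N$ qubits. The discrete data of each gate -- its type and the at most three qubit indices it acts on -- contributes $O(\log N)$ bits, while single-qubit gates carry continuous $SU(2)$ parameters. By error propagation, replacing every gate by a $\delta$-close one perturbs the output state by at most $s\delta$; taking $\delta=\epsilon_0/s$ and an $SU(2)$ net of size $(Cs/\epsilon_0)^{O(1)}$ per gate, the set of all size-$s$ output states is covered by $\exp\!\big(O(s(\log N+\log(s/\epsilon_0)))\big)$ balls of radius $\epsilon_0$. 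If every state of $\mathcal F$ were preparable with at most $s$ gates, then, being pairwise more than $2\epsilon_0$ apart, the members of $\mathcal F$ would occupy distinct balls, forcing
\begin{equation*}
 s\big(\log N+\log(s/\epsilon_0)\big)=\Omega(\log|\mathcal F|)=\Omega(dn).
\end{equation*}

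Finally I would remove the self-reference in $\log s$ by a case split: if $s\ge \frac{dn}{\log(m+n)+\log d}$ the desired bound already holds, so assume the contrary, whence $s\le dn$ and $\log(s/\epsilon_0)=O(\log(dn))=O(\log(m+n)+\log d)$; combined with $\log N=O(\log(m+n))$ this yields $\log N+\log(s/\epsilon_0)=O(\log(m+n)+\log d)$, and therefore $s=\Omega\!\left(\frac{dn}{\log(m+n)+\log d}\right)$. Adding the light-cone bound gives the claim, and by contraposition some state of $\mathcal F$ genuinely needs this many gates. The main obstacle is the rigorous treatment of the continuous gate parameters -- obtaining a clean covering number with exactly a single logarithmic factor $\log(m+n)+\log d$ in the denominator -- together with verifying the packing lower bound for $\mathcal F$; the care needed is in ensuring $\log|\mathcal F|=\Omega(dn)$ (which requires $n-\log d=\Theta(n)$, handled for larger $d$ by combining with the baseline $\Omega(d)$ from the amplitude dimension) and in discharging the circular $\log s$ via the case split above.
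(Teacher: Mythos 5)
Your proposal is correct and rests on the same fundamental idea as the paper's proof --- a counting argument made workable by discretizing the continuous single-qubit parameters, combined with the $\Omega(d+n)$ baseline --- but the execution differs in two respects worth noting. First, the hard family: the paper takes $\mathcal{D}_d$ to be \emph{all} uniform-amplitude states over $d$-subsets, whose pairwise distance is only $\sqrt{2/d}$, and therefore needs discretization precision $\delta = \Theta(d^{-3/2}n^{-1})$; you instead extract a Gilbert--Varshamov subfamily of supports with pairwise symmetric difference at least $d/2$, getting $\Omega(1)$ separation, which lets you work with a fixed radius $\epsilon_0 = 1/4$ and a coarser, size-dependent net. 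Second, the self-reference: the paper sidesteps the $\log s$ issue entirely by invoking the known $O(dn)$ upper bound from prior work to calibrate $\delta$ in advance --- every exact circuit of size $T \leq cdn$ is rounded gate-by-gate to a circuit over the discretized set $\widetilde{G}$ whose output is within $\sqrt{1/(4d)}$ of the target, so the counting inequality $\bigl((m+n)^2 (6\pi/\delta + 2)\bigr)^T \geq \abs{\mathcal{D}_d}$ is immediate --- whereas your argument is self-contained, taking $\delta = \epsilon_0/s$ and discharging the resulting $\log(s/\epsilon_0)$ term via the case split on whether $s \geq \frac{dn}{\log(m+n)+\log d}$. Both routes must fall back on the $\Omega(d)$ dimensionality baseline to absorb the $d\log d$ loss (the paper when $\log d = \Omega(n)$, you when $d > 2^{(1-c)n}$), and both land on the same bound; your version has the mild advantages of not presupposing any preparation upper bound and of handling arbitrary $SU(2)$ single-qubit gates directly via nets, while the paper's is shorter because its precalibrated, uniform precision makes the pigeonhole step a one-line inequality.
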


\subsubsection{SQSP with Unlimited Ancillary Qubits}
We provide a complete characterization of the circuit size for SQSP in the case where an unlimited number of ancillary qubits is available.

\begin{theorem}\label{thm: umlimited}
    {With unlimited  number of ancillary qubits available}, the circuit size for preparing $n$-qubit $d$-sparse quantum states is $\ta{\frac{nd}{\log nd} + n}$. Furthermore, if $d =\mathrm{poly}(n)$, then the circuit size is $\ta{\frac{nd}{\log n} + n}$, which can be achieved without using ancillary qubits.
\end{theorem}

\begin{proof}

According to \cref{remark:ub-on-ancillary}, it is sufficient to assume  $m = \ta{\frac{nd}{\log nd}+n}$ ancillary qubits available. Therefore, by setting $m = \ta{\frac{nd}{\log nd}+n}$ in \cref {thm:lb1}, we get the lower bound   $\om{\frac{nd}{\log nd} + n}$, which can be achieved by \cref{thm:tradeoff}.

Furthermore,  if $d = \mathrm{poly} (n)$, then  $\ta{\frac{nd}{\log nd} + n}$ becomes $\ta{\frac{nd}{\log n} + n}$, which can be achieved without using ancillary qubits as shown in  \cref{thm:withoutancillary}.
\end{proof}

\subsection{Proof Techniques}

\subsubsection{SQSP  without Ancillary Qubits}

The idea of  SQSP  without ancillary qubits proposed in this paper aligns with prior research~\cite{malvetti2021quantum, ramacciotti2023simple}. Given the state description  $\mathcal{P} = \cbra{(\alpha_i, q_i)}_{0\leq i \leq d-1}$, initially a dense $\ceil{\log d}$-qubit quantum state $\sum_i \alpha_i \ket{\sigma^{-1}(q_i)}$ is prepared, where $\sigma$ is some permutation such that $0 \leq \sigma^{-1}(q_i) \leq d - 1$. Subsequently, this state is transformed into the target quantum state $\sum_i \alpha_i \ket{q_i}$ by applying $\sigma$. The dense quantum state  $\sum_i \alpha_i \ket{\sigma^{-1}(q_i)}$ can be prepared with a circuit of size $O(d)$. Therefore, the primary contribution to circuit size arises from the second step, i.e., the implementation of the permutation $\sigma$. Prior works \cite{malvetti2021quantum, ramacciotti2023simple} have adopted circuits of size $O(nd)$ for implementing $\sigma$, which results in an overall circuit size of $O(nd)$. In comparison, we develop a method for implementing $\sigma$ with a size of $\bo{\frac{nd}{\log n}+n}$, as detailed in \cref{lemma:sigma}.

\begin{lemma}\label{lemma:sigma}
For any permutation $\sigma \in \mathfrak{S}_{2^n}$, there exists a quantum circuit implementing $\sigma$ of size $\bo{\frac{n\,\mathrm{size}(\sigma)}{\log n} + n\log\min\cbra{\mathrm{size}(\sigma), \log n}}$ without using ancillary qubits. 
\end{lemma}

In the above, $\mathfrak{S}_{2^n}$ denotes the set of all permutations on $\{0, 1\}^n$, and $\mathrm{size}(\sigma)$ denotes the number of non-fixed points of a permutation $\sigma$ (see the formal definition in \cref{eq:non-fixed}).  
Reversible circuits, which implement permutations, have been extensively studied in the literature~\cite{shende2003synthesis, brodsky2004reversible, wille2009bdd, saeedi2010reversible, saeedi2010library, saeedi2013synthesis, abdessaied2014upper, zakablukov2017asymptotic, li2023asymptotically, markov2008optimal, jiang2020optimal}.  
Shende et al.~\cite{shende2003synthesis} proposed a synthesis method for realizing a permutation $\sigma \in \mathfrak{S}_n$ using a circuit of size $\bo{n2^n}$; a similar result was obtained in~\cite{brodsky2004reversible}.  
Saeedi et al.~\cite{saeedi2010reversible, saeedi2010library} introduced cycle-based methods that reduced the circuit size in practice, although the asymptotic complexity remained unchanged.  
Significant improvements were later made independently by Zakablukov~\cite{zakablukov2017asymptotic} and  Wu and Li~\cite{li2023asymptotically}, who achieved an asymptotically optimal circuit size of $\bo{\frac{n2^n}{\log n}}$.  
However,  Refs. ~\cite{zakablukov2017asymptotic,li2023asymptotically} did not consider the sparsity of permutations\footnote{The sparsity of a permutation $\sigma \in \mathfrak{S}_{2^n}$ is measured by the quantity $\mathrm{size}(\sigma)$, which equals $2^n$ in the worst case, but can be significantly smaller in many practical applications.}.  
\cref{lemma:sigma} improves upon the results in~\cite{zakablukov2017asymptotic, li2023asymptotically}, and plays a key role in reducing the circuit size from $\bo{nd}$ to $\bo{\frac{nd}{\log n} + n}$ in \cref{thm:withoutancillary}. We believe this result is of independent interest and may find applications in other contexts.

\subsubsection{\texorpdfstring{SQSP  with $m$ Ancillary Qubits}{SQSP  with m Ancillary Qubits}}
To further reduce the circuit size with ancillary qubits, we introduce a new framework for SQSP that is totally different from the approach without ancillary qubits. Specifically, we employ a novel integer encoding method, denoted as the $(n, r)$-unary encoding (for the formal definition, see \cref{def:nr}). The new framework is roughly divided into two steps: firstly, prepare an intermediate state using the $(n,r)$-unary encoding, and secondly transform the intermediate state to a state in the binary encoding. Intuitively, the $(n, r)$-unary encoding divides an $n$-bit integer to $\frac{n}{r}$ continuous parts, each of length $r$, and represents each part using the corresponding unary encoding. When $r = n$, the $(n, r)$-unary encoding recovers to the standard unary encoding. This integer representation method was previously introduced in \cite{johri2021nearest}, where the authors showed that a general quantum state using the $(n, r)$-unary encoding can be prepared by a quantum circuit of size $\bo{2^n}$ and depth $\bo{r2^{n-r}}$ with $\frac{n2^r}{r}$ qubits. However, they focused on reducing the circuit depth and did not account for the sparsity of the quantum state. Also, their technique does not apply to SQSP. We demonstrate that a sparse quantum state using the $(n, r)$-unary encoding can be efficiently prepared. 
\begin{lemma}\label{lemma:nr-encoding}
    Given two positive integers $n$ and $r$ such that $r$ divides $n$, and a set of size $d$ $\cbra{(q_i, \alpha_i)}_{0 \leq i \leq d-1}$ such that $\sum_i \abs{\alpha_i}^2 = 1$, $q_i \in \cbra{0, 1}^n$ for all $0 \leq i \leq d-1$ and $q_i \neq q_j$ for any $i \neq j$, there exists a quantum circuit preparing the following $\frac{n2^r}{r}$-qubit state
    \begin{align}
        \sum_{i=0}^{d-1} \alpha_i \ket{e_{q_i(n-r, n)}} \ket{e_{q_i(n-2r, n-r)}} \dots \ket{e_{q_i(0, r)}}
        = \sum_{i=0}^{d-1} \alpha_i \rbra*{\bigotimes_{j=0}^{\frac{n}{r}-1} \ket{e_{q_i(jr, (j+1)r)}}}.
    \end{align}
    The circuit is of size $\bo{\frac{nd}{r}}$, using one ancillary qubit.
\end{lemma}
This state using the $(n, r)$-unary encoding serves as an intermediate form, ultimately transformed into the binary encoding. We believe that this intermediate state will also have applications in other tasks.

\subsubsection{Lower Bound for Circuit Size}

Prior proofs of lower bounds for quantum circuit size relied on dimensionality arguments~\cite{shende2004smaller, plesch2011quantum}, resulting in a trivial lower bound of $\om{d}$  when applied to the sparse quantum state preparation problem. Here we employ the counting argument. The challenge in applying this method to establish lower bounds for quantum circuit size lies in the existence of an infinite number of single-qubit quantum gates, i.e., the parameters of $\cbra{R_x(\theta), R_y(\theta), R_z(\theta)}$ are continuous. The fundamental idea in the proof is to discretize the parameters of the single-qubit quantum gates. Specifically, we consider approximately preparing  sparse states in the following set to sufficient precision, ensuring that the resulting states are distinct from one another:
\begin{align}
    \mathcal{D}_d \coloneqq \left\{ \ket{\psi_\mathcal{S}} \coloneqq \frac{1}{\sqrt{d}} \sum_{i \in \mathcal{S}} \ket{i} \mid \mathcal{S} \subset \{0, \dots, 2^n-1\}, \abs{\mathcal{S}} = d \right\}.
\end{align}
To achieve this, we only need to discretize the single-qubit gates to a certain precision, and we demonstrate that this discretization does not increase the circuit size. Consequently, we can derive a lower bound on the circuit size for SQSP based on the lower bound for the task of approximately preparing the states in $\mathcal{D}_d$ using the discretized single-qubit gates and CNOT gates. We hope the discretization technique will find more applications in proving lower bounds on the circuit size for other subclasses of quantum states.

\subsection{Related Work}

The problem of SQSP  has garnered significant attention \cite{gleinig2021efficient, malvetti2021quantum, ramacciotti2023simple, mozafari2022efficient, de2020circuit, de2022double, zhang2022quantum, sun2023asymptotically, mao2024towards}. 
  Gleinig and Hoefler \cite{gleinig2021efficient} were the first to address this problem and presented an algorithm generating a quantum circuit of size $O(nd)$ without using ancillary qubits for any $n$-qubit $d$-sparse quantum state. 
Subsequently, Malvetti et al.~\cite{malvetti2021quantum} and Ramacciotti et al.~\cite{ramacciotti2023simple} presented permutation-based preparation algorithms, achieving the same circuit size using 0 and 1 ancillary qubits, respectively.
 de Veras et al.~\cite{de2020circuit} improved upon the ideas in \cite{park2019circuit, trugenberger2001probabilistic} and introduced a preparation algorithm based on quantum state splitting, achieving the same circuit size by employing $2$ ancillary qubits.  They further optimized this algorithm \cite{de2022double}, presenting a circuit size depending on the Hamming weights of the basis states with non-zero amplitudes: $O(\sum_i |{q}_i|)$. In the worst case, the circuit size remains $O(nd)$. Mozafari et al.~\cite{mozafari2022efficient} employed decision diagrams to represent quantum states and introduced an algorithm with circuit size $O(kn)$, using $1$ ancillary qubit, where $k$ denotes the number of paths in the decision diagram. For a $d$-sparse quantum state, as the number of paths in its associated decision diagram satisfies $k \leq d$, the circuit size of this method in the worst-case remains $O(nd)$. In summary, the circuit sizes of all the aforementioned algorithms are $O(nd)$.  Recently, Mao et al.~\cite{mao2024towards} propose a sparse state preparation algorithm which generates  circuits of size $O(\frac{nd}{\log n} + n)$ with two ancillary qubits, improving the previous bound  $O(nd)$. 

In addition to optimizing circuit size, Zhang et al.~\cite{zhang2022quantum} focused on reducing circuit depth. They showed that with $O(nd\log d)$ ancillary qubits and $O(nd\log d)$ gates\footnote{The circuit size was not explicitly stated in~\cite{zhang2022quantum}, but can be inferred from the algorithm presented therein.}, any $n$-qubit $d$-sparse quantum state can be prepared by a circuit of depth $O(\log nd)$. They also proved a matching lower bound of $\om{\log nd}$ on the depth.  
Sun et al.~\cite{sun2023asymptotically} focused on circuit depth for general quantum state preparation, and noted that their method can also be extended to sparse states, yielding a depth complexity of $O(n\log nd + \frac{nd^2\log d}{n + m})$ with $m$ ancillary qubits. This was later improved to $\bo{\frac{nd\log d\log m}{m} + \log nd}$ in~\cite{zhang2024circuit}.

\subsection{Discussion}

We achieve a relatively complete understanding of the circuit size for SQSP in this paper; however, several interesting problems remain open and are worthy of further discussion:

\begin{itemize}
\item   In the scenario of using $m$ ancillary qubits,  can we prove a matching lower bound $\bo{\frac{nd}{\log (m+n)} + n}$ on the circuit size without assumptions?

\item  In this paper, all of our results focus on exact preparation, meaning the precise preparation of the states. What about the approximate preparation of $d$-sparse quantum states?

\item What is the optimal trade-off between the number of ancillary qubits and the circuit depth for SQSP? To the best of our knowledge, the only known results are a rough upper bound of $O(n\log nd + \frac{nd^2\log d}{n + m})$ mentioned in~\cite{sun2023asymptotically}, and a refined bound of $\bo{\frac{nd\log d\log m}{m} + \log nd}$ given in~\cite{zhang2024circuit}.

\end{itemize}

\subsection{Organization}
In Section~\ref{section:pre}, we recall some notations.
Section~\ref{sec:without} is devoted to the proof of Theorem~\ref{thm:withoutancillary}.
In Section~\ref{sec:with}, we prove Theorems~\ref{thm:tradeoff}, \ref{thm:lb2} and \ref{thm:lb1}.
Some conclusions are made in Section \ref{sec:conclusion}.

\section{Preliminaries}\label{section:pre}

In this paper, all logarithms are base 2. A permutation $\sigma$ on a finite set $A$ is a bijective function $\sigma \colon A \xrightarrow{} A$. Given a positive integer $n$, we denote by $\mathfrak{S}_{2^n}$ the set of all permutations on  $\{0, 1\}^n$, and the permutations considered in this paper all belong to $\mathfrak{S}_{2^n}$. 
A cycle $\rbra{a_1, a_2, \dots, a_k}$ is a permutation $\sigma$ such that $\sigma\rbra{a_1} = a_2$, $\sigma\rbra{a_2} = a_3$, $\dots$, and $\sigma\rbra{a_k} = a_1$. The length of a cycle is the number of elements it contains. A cycle of length two is called a transposition. A cycle of length $k$ is called a $k$-cycle. The composition of two permutations $\sigma_1$ and $\sigma_2$ is denoted by $\sigma_2\circ \sigma_1$ where the right one is applied first. The composition operation is typically not commutative, but when two permutations are disjoint, we can interchange them. 

 Any permutation can be decomposed as a composition of a finite number of transpositions. A permutation is even (odd) if it can be written as a composition of an even (odd) number of transpositions. 
 
 \begin{lemma}[\cite{hall2018theory}]\label{lm: permutation-dec}
 Any permutation can be written as the composition of a finite number of pairwise disjoint cycles.
 \end{lemma}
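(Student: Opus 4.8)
The plan is to prove the disjoint cycle decomposition via the orbit structure induced by $\sigma$ on the finite set $A = \{0,1\}^n$. First I would define a relation on $A$ by declaring $a \sim b$ whenever $b = \sigma^k(a)$ for some integer $k \in \mathbb{Z}$. I would verify this is an equivalence relation: reflexivity follows from $k = 0$, symmetry from replacing $k$ by $-k$ (using that $\sigma$ is a bijection, so $\sigma^{-1}$ is well-defined), and transitivity from adding exponents. The equivalence classes---the orbits of $\sigma$---therefore partition $A$ into finitely many disjoint blocks $O_1, \dots, O_r$.

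The core step is to show that $\sigma$ restricted to each orbit $O_j$ acts as a single cycle. Fixing a representative $a \in O_j$ and iterating, I consider the sequence $a, \sigma(a), \sigma^2(a), \dots$; since $A$ is finite this sequence must eventually repeat. I would argue that the first repetition returns exactly to $a$: if $\sigma^i(a) = \sigma^j(a)$ with $0 \le i < j$, then injectivity of $\sigma$ lets me cancel to get $\sigma^{j-i}(a) = a$, so taking $k$ minimal with $\sigma^k(a) = a$ yields distinct elements $a, \sigma(a), \dots, \sigma^{k-1}(a)$ that exhaust $O_j$. This exhibits $\sigma|_{O_j}$ as the $k$-cycle $\rbra{a, \sigma(a), \dots, \sigma^{k-1}(a)}$, which I extend to a permutation $c_j \in \mathfrak{S}_{2^n}$ by letting it fix every element outside $O_j$.

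Finally I would verify that $\sigma = c_r \circ \cdots \circ c_1$. The cycles $c_1, \dots, c_r$ are pairwise disjoint because the orbits are, so by the commutativity of disjoint permutations noted in the preliminaries the composition is independent of the order chosen. For any $x \in A$, the element $x$ lies in a unique orbit $O_j$; the cycle $c_j$ sends $x$ to $\sigma(x)$, while every other $c_i$ fixes $x$. Hence the composition sends $x$ to $\sigma(x)$ as well, and since this holds for all $x$, the two permutations coincide.

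I expect the main obstacle to be the core step: rigorously arguing that each orbit has the clean cyclic structure $\cbra{a, \sigma(a), \dots, \sigma^{k-1}(a)}$ with these elements distinct and $\sigma^k(a) = a$. This is precisely where finiteness of $A$ and injectivity of $\sigma$ are both essential---finiteness guarantees that a repetition occurs, while injectivity forces the first repetition to close the loop back at $a$ rather than entering a proper \emph{tail}. An alternative route would be induction on the size of the support $\cbra{x : \sigma(x) \ne x}$: peel off one cycle through a non-fixed point, compose with its inverse to strictly shrink the support, and invoke the inductive hypothesis. This trades the orbit bookkeeping for a cleaner inductive skeleton, but it requires the same finiteness-plus-injectivity argument to produce the first cycle.
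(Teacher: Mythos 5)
Your proof is correct: the paper itself gives no proof of this lemma (it is cited to Hall's textbook as a classical fact), and your orbit-decomposition argument is precisely the standard proof that citation points to. The one wording to tighten is in the final verification step: for $i \neq j$ the cycle $c_i$ must fix not only $x$ but also $\sigma(x)$, which holds because orbits are $\sigma$-invariant, so $\sigma(x) \in O_j$ and is therefore fixed by every $c_i$ with $i \neq j$.
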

 
 A fixed point of a permutation $\sigma$ is an element $x \in \{0, 1\}^n$ satisfying $\sigma(x) = x$. 
 For any permutation $\sigma$, let $\mathrm{size}(\sigma)$ denote the number of non-fixed points of $\sigma$:
\begin{align}\label{eq:non-fixed}
    \mathrm{size}(\sigma) \coloneqq \abs{\{x \in \{0, 1\}^n \mid \sigma(x) \neq x\}}.
\end{align}

Here we give a simple example to show the notions given above. Consider the permutation $\sigma \in \mathfrak{S}_8$ given as follows:
\begin{align*}
    & \sigma(0) = 1, \quad \sigma(1) = 5, \sigma(2) = 4, \sigma(3) = 3, \\
    & \sigma(4) = 2, \quad \sigma(5) = 7, \sigma(6) = 6, \sigma(7) = 0. 
\end{align*}
$\sigma$ could be written as 
\begin{align}
    \sigma & = (0, 1, 5, 7) \circ (2, 4) \circ (3) \circ (6) \label{equ:1} \\
    & =  (0, 1, 5, 7) \circ (2, 4) \label{equ:2} \\
    & = (0, 5) \circ (0, 1) \circ (5, 7) \circ (2, 4). \label{equ:3}
\end{align}
In \cref{equ:1}, $\sigma$ is written as the composition of pairwise disjoint cycles, where $(0, 1, 5, 7)$ is a $4$-circle and $(2, 4)$ is a transposition. In \cref{equ:2}, we omit $1$-circles. In \cref{equ:3}, $\sigma$ is decomposed as the composition of $5$ transpositions, thus $\sigma$ is an odd permutation. The number of non-fixed points of $\sigma$ is $\mathrm{size}(\sigma)=6$.

An $n$-qubit Toffoli gate is a quantum gate acting on $n$ qubits and applying an X gate on the last qubit when the preceding $n-1$ qubits are all in the state $\ket{1}$. Gidney \cite{gidney2015} demonstrated the following lemma in his well-known blog:

\begin{lemma}[\cite{gidney2015}]\label{lemma:gidney}
    Any $n$-qubit Toffoli gate can be implemented by a quantum circuit of size $\bo{n}$ without ancillary qubits.
\end{lemma}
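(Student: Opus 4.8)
\emph{Proof strategy.} The plan is to view the $n$-qubit Toffoli gate as an $(n-1)$-controlled $X$ gate $C^{n-1}(X)$ and to decompose it into a chain of ordinary $3$-qubit Toffoli gates, each of which costs $O(1)$ elementary gates via the $10$ single-qubit plus $6$ CNOT realization noted above. For $n\le 4$ only a constant number of gates is needed, so the content is in making the reduction cost $O(n)$ Toffoli gates rather than the naive $O(n^2)$ obtained by a direct recursion.

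The first ingredient I would establish is a borrowed-ancilla primitive: if, besides the $m$ controls and the target, at least $m-1$ further qubits in \emph{arbitrary} (``dirty'') states are available, then $C^{m}(X)$ can be built from $O(m)$ Toffoli gates. This follows from a V-chain that writes the partial products $c_1c_2,\ c_1c_2c_3,\dots$ up a ladder of dirty qubits, flips the target with the topmost one, and then runs the ladder backwards; performing the ladder in a compute--act--uncompute pattern doubles the count but returns every borrowed qubit to its initial state, so the net map is exactly $C^{m}(X)$. The second ingredient reduces the number of borrowed qubits actually needed to a single one. Splitting the controls into halves $A$ and $B$ of size about $(n-1)/2$, one has, for any single extra qubit $g$, the conjugation identity
\begin{align}
C^{n-1}(X)_t \;=\; \Lambda_A^g\,\Lambda_{B,g}^t\,\Lambda_A^g\,\Lambda_{B,g}^t,
\end{align}
where $\Lambda_A^g$ is $C^{|A|}(X)$ with controls $A$ and target $g$, and $\Lambda_{B,g}^t$ is $C^{|B|+1}(X)$ with controls $B\cup\{g\}$ and target $t$. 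Writing $a=\prod_{i\in A}c_i$ and $b=\prod_{j\in B}c_j$, a one-line check gives $t\mapsto t\oplus b g\oplus b(g\oplus a)=t\oplus ab$ while $g$ returns to its input value, so the identity is valid for an arbitrary dirty $g$. Crucially, each of the four factors is an $(\approx n/2)$-controlled gate whose complementary half (together with $t$ or $g$) already supplies $\approx n/2\ge(\#\text{controls})-1$ dirty qubits, so each factor costs $O(n)$ by the primitive, and $C^{n-1}(X)$ costs $O(n)$ using just the \emph{one} borrowed qubit $g$.

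The remaining and most delicate point is that the register for $C^{n-1}(X)$ holds exactly $n$ qubits---all but one are controls---so there is no qubit left to play the role of $g$. I would resolve this by one further split in which a small block of controls is temporarily repurposed as the scratch register for the opposite block and then uncomputed, so that the whole construction runs with literally zero qubits outside the $n$. The hard part is precisely this bookkeeping: one must certify that every qubit pressed into service as a dirty ancilla---in particular every control that is temporarily overwritten---is restored to its original (unknown) value, which is exactly what the compute--uncompute conjugation structure and the Boolean identity above guarantee. Counting gates, $C^{n-1}(X)$ then becomes $O(n)$ Toffoli gates and hence $O(n)$ elementary gates with no ancillary qubits.
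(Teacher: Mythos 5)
Your first two ingredients are sound, and they are in fact exactly the first two constructions in the blog post the paper cites for this lemma (the paper gives no proof of its own, only the citation): the V-chain, which implements $C^{m}(X)$ with $O(m)$ Toffoli gates given roughly $m$ \emph{dirty} borrowed qubits, and the four-gate conjugation identity with a single borrowed qubit $g$, whose Boolean check $t \mapsto t \oplus bg \oplus b(g\oplus a) = t \oplus ab$ you carry out correctly. The gap is the final step, which you describe as ``bookkeeping'': repurposing one of the controls as the borrowed qubit $g$. This step is not merely incomplete; it cannot be repaired within your framework. Concretely, if $g$ is itself one of the controls, the same one-line check shows that the four-gate circuit maps $t \mapsto t \oplus ab$ where $a$ and $b$ are the ANDs of the \emph{remaining} controls: the value of $g$ has dropped out of the product entirely, so the circuit flips $t$ even on inputs with $g = 0$. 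The compute--uncompute structure does restore $g$, but restoring $g$ was never the difficulty; the implemented function is simply wrong ($t \oplus ab$ instead of $t \oplus abg$). Structurally, your stated goal is to write the $n$-qubit Toffoli as $O(n)$ three-qubit Toffoli gates acting on the same $n$ qubits with no ancilla, and \emph{no} such decomposition exists for $n \geq 4$, regardless of the number of gates: the $n$-qubit Toffoli is a single transposition of computational basis states, hence an odd permutation of $\{0,1\}^n$, while an X, CNOT, or three-qubit Toffoli gate embedded into $n \geq 4$ wires induces $2^{n-1}$, $2^{n-2}$, or $2^{n-3}$ transpositions respectively---an even permutation in every case---and a product of even permutations is even.

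This parity obstruction is precisely why the zero-ancilla case is the entire content of the lemma beyond Barenco-style constructions, and why any correct proof must leave the class of classical reversible (NCT) circuits. The construction the paper cites does this: it conjugates the target by Hadamards to turn $C^{n-1}(X)$ into $C^{n-1}(Z)$ and synthesizes that phase using small-angle single-qubit rotations interleaved with increment-style arithmetic built on borrowed bits (equivalently, one can use relative-phase Toffolis). Genuinely non-permutation single-qubit gates are unavoidable here, not an implementation detail, so your concluding claim of ``$O(n)$ Toffoli gates and hence $O(n)$ elementary gates'' asserts, as an intermediate step, a statement that is provably false even though the lemma itself is true.
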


We briefly introduce several complexity notations. Given two functions $f(n) $ and $g(n) $ defined on $\mathbb{N}$:
\begin{itemize}
   \item $f(n) = \bo{g(n)} $ if there exist positive constants $c $ and $n_0$ such that $f(n) \leq c \cdot g(n) $ for all $n \geq n_0 $.  
   \item $f(n) = \om{g(n)} $ if there exist positive constants $c $ and $n_0$ such that $f(n) \geq c \cdot g(n) $ for all $n \geq n_0$.  
   \item $f(n) = \ta{g(n)} $ if both $f(n) = \bo{g(n)} $ and $f(n) = \om{g(n)} $.
     \item $f(n) = o(g(n))$ if, for any positive constant $ c > 0 $, there exists a constant $ n_0 $ such that $f(n) \leq c \cdot g(n)$ for all $n \geq n_0$.
    \item $f(n) = \omega(g(n))$ if, for any positive constant $ c > 0 $, there exists a constant $ n_0 $ such that $f(n) \geq c \cdot g(n)$ for all $n \geq n_0$.
\end{itemize}

\section{SQSP without Ancillary Qubits}\label{sec:without}

In this section, we present a sparse quantum state preparation algorithm without using ancillary qubits. The algorithm presented here relies on the efficient  implementation of a permutation $\sigma \in \mathfrak{S}_n$. 
As mentioned earlier, the circuit size for implementing $\sigma$ in prior works~\cite{malvetti2021quantum, ramacciotti2023simple} is $\bo{n\,\mathrm{size}(\sigma)}$. Improving upon the results of~\cite{zakablukov2017asymptotic, li2023asymptotically}, we show in the following lemma that this bound can be further reduced.

\begin{lemma}[Restatement of \cref{lemma:sigma}]\label{lemma:sigma-restate}
For any permutation $\sigma \in \mathfrak{S}_{2^n}$, there exists a quantum circuit implementing $\sigma$ of size $\bo{\frac{n\,\mathrm{size}(\sigma)}{\log n} + n\log\min\cbra{\mathrm{size}(\sigma), \log n}}$ without using ancillary qubits. 

\end{lemma}

\begin{proof}

The basic steps  to implement $\sigma$ are as follows:
\begin{itemize}
    \item \textbf{Step 1}: Decompose $\sigma$ into the composition of as few permutations $\sigma_i$ as possible, where each permutation $\sigma_i$ consists of at most $m$ pairwise disjoint transpositions and satisfies that $\mathrm{size}(\sigma_i)$ is a power of 2. $m$ is a parameter to be determined, and will be a power of 2.\label{step1}
    
    \item \textbf{Step 2}: Assume $\sigma_i = (x_0, x_1) \circ  (x_2, x_3) \circ \cdots \circ (x_{2m-2}, x_{2m-1})$ and implement each $\sigma_i$ successively by the following steps: 
    \begin{itemize}
        \item \textbf{Step 2a}: Execute the permutation $\sigma_{i,1}$ satisfying the condition: for any $0 \leq j \leq 2m-1$, $\sigma_{i,1}(x_j) = j$;
        \item \textbf{Step 2b}: Execute the permutation $\sigma_{i,2} = (0, 1) \circ (2, 3) \circ \cdots \circ (2m-2, 2m-1)$;
        \item \textbf{Step 2c}: Execute the inverse of permutation $\sigma_{i,1}$.
    \end{itemize}
\end{itemize}
Below, we illustrate how to implement this process, set the parameter $m$, and analyze the circuit size.

In Step 1, we aim to decompose $\sigma$ into a composition of at most
\begin{align}
    M = \left\lfloor \frac{\mathrm{size}(\sigma)}{m} \right\rfloor + \bo{\log \left(\min\{m, \mathrm{size}(\sigma)\}\right)}
\end{align}
permutations $\{\sigma_i \mid 0 \leq i \leq M-1\}$, i.e., $\sigma = \sigma_{M-1} \circ \dots \circ \sigma_0$, where each $\sigma_i$ consists of at most $m$ pairwise disjoint transpositions, and $\mathrm{size}(\sigma_i)$ is a power of $2$.
Firstly, by \cref{lm: permutation-dec}, we decompose $\sigma$ into the composition of $K$ disjoint cycles for some integer $K \geq 0$, i.e., $\sigma = \rho_{K-1} \circ \dots \circ \rho_0$, where each $\rho_k$ is an $r_k$-cycle for some integer $r_k > 0$. It is clear that $\sum_k r_k = \mathrm{size}(\sigma)$.
Each $r_k$-cycle can be decomposed into two sets of disjoint transpositions, each of size either $\left\lfloor \frac{r_k}{2} \right\rfloor$ or $\left\lfloor \frac{r_k}{2} \right\rfloor - 1$~\cite{moore2001parallel}. For example, suppose $\rho = (x_0, x_1, \dots, x_{2k-1})$ is a cycle of even length. Then we have $\rho = \rho'' \circ \rho'$, where
\begin{align}
    \rho' &= (x_0, x_{2k-1}) \circ (x_1, x_{2k-2}) \circ \cdots \circ (x_{k-1}, x_{k}), \\
    \rho'' &= (x_1, x_{2k-1}) \circ (x_2, x_{2k-2}) \circ \cdots \circ (x_{k-1}, x_{k+1}).
\end{align}
A similar construction applies when $\rho$ is an odd-length cycle.
Thus, $\sigma$ can be decomposed into two sets of disjoint transpositions, each of size at most $\floor{\frac{\mathrm{size}(\sigma)}{2}}$.
Each such set can be further partitioned into at most $\floor{\frac{\mathrm{size}(\sigma)}{2m}} + \ceil{\log (\min\{m, \mathrm{size}(\sigma)\})}$ sets of disjoint transpositions, where each set contains at most $m$ transpositions and the number of transpositions is a power of $2$.

In Step 2, we show each $\sigma_i$ can be efficiently implemented. To avoid introducing more parameters, in the following we assume that $\sigma_i = (x_0, x_1) \circ  (x_2, x_3) \circ \cdots \circ (x_{2m-2}, x_{2m-1})$ is composed of $m$ pairwise disjoint transpositions. Step 2 is divided into three substeps. First, we illustrate how to implement a permutation $\sigma_{i,1}$ that satisfies $\sigma_{i,1}(x_j) = j$ for any $0 \leq j \leq 2m-1$. Note that $x_j$ is an integer represented by $n$ bits. We treat $x_j$ as an $n$-dimensional row vector $x_j = (x_{j, 0}, x_{j, 1}, \dots, x_{j, n-1})$ satisfying $x_j= \sum_{k=0}^{n-1} x_{j, k}2^k$, where $x_{j, k}$ represents the $k$-th bit and the bits are arranged from the least significant bit to the most significant bit, contrary to the usual convention. We construct a matrix composed of $x_j$ to track the changes in $x_j$:
\begin{align}
    A 
     = \rbra*{\begin{matrix}
        x_0 \\ x_1 \\ \dots \\ x_{2m-2} \\ x_{2m-1}
    \end{matrix} }
     = \rbra*{\begin{matrix}
        x_{0, 0} & x_{0, 1} & \dots & x_{0, n-2} & x_{0, n-1} \\
        x_{1, 0} & x_{1, 1} & \dots & x_{1, n-2} & x_{1, n-1} \\
        \vdots &&\vdots&&\vdots \\
        x_{2m-2, 0} & x_{2m-2, 1} & \dots & x_{2m-2, n-2} & x_{2m-2, n-1} \\
        x_{2m-1, 0} & x_{2m-1, 1} & \dots & x_{2m-1, n-2} & x_{2m-1, n-1} \\
    \end{matrix}}
\end{align}

Now we introduce additional conditions to $m$: suppose $m$ is a power of 2 and satisfies $2m \leq \log n$. With these conditions, realizing the permutation $\sigma_{i,1}$ is equivalent to transforming the matrix $A$ into another matrix $\widetilde{A}$ using elementary gates:
\begin{align}
    \widetilde{A} = \left( \overbrace{\begin{matrix}
        0 & 0 & \dots & 0    \\
        1 & 0 & \dots & 0    \\
        \vdots & & & \vdots  \\
        0 & 1 & \dots & 1    \\
        1 & 1 & \dots & 1   \\
    \end{matrix}}^{\log 2m} \right. \quad
    \left. \overbrace{\begin{matrix}
           0 & \dots & 0 \\
           0 & \dots & 0 \\
           & \vdots& \\
           0 & \dots & 0 \\
           0 & \dots & 0 
    \end{matrix}}^{n - \log 2m} \right)
\end{align}

Suppose that the matrix $A$ has $\ell$ distinct non-zero columns. Since the matrix $A$ has $2m$ distinct rows, we have $\log 2m \leq \ell \leq 2^{2m}-1$. When the $j$-th column is identical to the $k$-th column, i.e., $x_{i, j} = x_{i, k}$ for all $0 \leq i \leq 2m-1$, and they are not all-zero columns, we apply, without loss of generality, a CNOT gate with the $j$-th qubit as the control qubit and the $k$-th qubit as the target qubit, transforming the $k$-th column into an all-zero column. This process is repeated until the matrix no longer contains identical non-zero columns.  This step requires at most $n-\ell$ CNOT gates.
Next, by using at most $\ell$ swap gates (each of which can be implemented with $3$ CNOT gates), we swap the $\ell$ non-zero columns to the first $\ell$columns, obtaining matrix $A_1$:
\begin{align}
    A_1 = \left( \overbrace{\begin{matrix}
        a_{0, 0} & a_{0, 1} & \dots & a_{0, \ell}    \\
        a_{1, 0} & a_{1, 1} & \dots & a_{1, \ell}   \\
        \vdots & & & \vdots  \\
        a_{2m-2, 0} & a_{2m-2, 1} & \dots & a_{2m-2, \ell}   \\
        a_{2m-1, 0} & a_{2m-1, 1} & \dots & a_{2m-1, \ell}   \\
    \end{matrix}}^{\ell} \right. \quad
    \left. \overbrace{\begin{matrix}
           0 & \dots & 0 \\
           0 & \dots & 0 \\
           & \vdots& \\
           0 & \dots & 0 \\
           0 & \dots & 0 
    \end{matrix}}^{n - \ell} \right)
\end{align}

Next, we proceed to transform each row of matrix $A_1$ step by step, gradually converting it to matrix $\widetilde{A}$. We start with the first row (indexed as row $0$). If $a_{0, k} = 1$ for any $0 \leq k \leq n - 1$, we apply an X gate to the $k$-th qubit. This step requires at most $\ell$ X gates. Upon completing this step, matrix $A_1$ becomes matrix $A_2$:
\begin{align}
    A_2 = \left( \overbrace{\begin{matrix}
        0 & 0 & \dots & 0    \\
        b_{1, 0} & b_{1, 1} & \dots & b_{1, \ell}   \\
        \vdots & & & \vdots  \\
        b_{2m-2, 0} & b_{2m-2, 1} & \dots & b_{2m-2, \ell}   \\
        b_{2m-1, 0} & b_{2m-1, 1} & \dots & b_{2m-1, \ell}   \\
    \end{matrix}}^{\ell} \right. \quad
    \left. \overbrace{\begin{matrix}
           0 & \dots & 0 \\
           0 & \dots & 0 \\
           & \vdots& \\
           0 & \dots & 0 \\
           0 & \dots & 0 
    \end{matrix}}^{n - \ell} \right)
\end{align}

Suppose the rows indexed as $0$ to $j-1$ have been successfully transformed, we turn our attention to row indexed as $j$, which can be divided into two scenarios:

\begin{itemize}
    \item In the case where there exists an element $b_{j, k} \neq 0$ in row $j$ with $k > \log 2m$, for each $0 \leq k' < \ell$ satisfying $k' \neq k$ and $b_{j, k'} \neq \widetilde{A}_{j, k}$, we apply a CNOT gate with the $k$-th qubit as the control qubit and the $k'$-th qubit as the target qubit. This step requires at most $\ell$ CNOT gates. 
    Subsequently, to eliminate the value $b_{j, k}$, we employ a multi-control Toffoli gate. The control qubits for this gate correspond to the non-zero elements in row $j$ of matrix $\widetilde{A}$, while the target qubit is the $k$-th qubit. This Toffoli gate can be implemented by a circuit of size $\bo{\log 2m}$ according to \cref{lemma:gidney}.

    \item In the case where there is no element $b_{j, k} \neq 0$ in row $j$ satisfying $k > \log 2m$, we first apply a multi-control Toffoli gate, where the control qubits are those corresponding to the non-zero elements in the current row, while the target qubit is the $(\log 2m + 1)$-th qubit. It is asserted that this Toffoli gate will not alter the elements in the rows indexed by  $0$ to $j-1$ of the current matrix, as otherwise, there would exist some $0 \leq j' \leq j -1$ such that the $j'$-th row is identical to the $j$-th row, contradicting the assumption that each row is distinct. This Toffoli gate contains at most $\log 2m$ control qubits. Hence, it can be implemented by a circuit of size $\bo{\log 2m}$ according to \cref{lemma:gidney}. Upon completion of this Toffoli gate, we revert to the first case.
\end{itemize}
Combining the above two cases, the transformation of row $j$ can be implemented using a circuit of size $\bo{\ell + \log 2m}$. Therefore, the permutation $\sigma_{i,1}$ can be implemented using a circuit of  size $\bo{n + m\ell + m\log m}$.

The final component for implementing $\sigma$ is $\sigma_{i, 2} = (0, 1) \circ (2, 3) \circ \cdots \circ (2m-2, 2m-1)$. Given that $m$ is a power of 2, the permutation $\sigma_{i, 2}$ essentially flips the first qubit when the last $n - \log 2m$ qubits are all in the state $\ket{0}$. Thus, this permutation can be realized using $2(n - \log 2m)$ X gates and a multi-controlled Toffoli gate with the last $n-\log 2m$ qubits as the control qubits and the first qubit as the target qubit. According to \cref{lemma:gidney}, the circuit size for this Toffoli gate is $\bo{n - \log 2m}$.

In summary, the circuit size for implementing any permutation $\sigma_i$ composed of $m$ pairwise disjoint transpositions is $\bo{n + m\ell + m\log m}$. As long as $m \leq \frac{\log n}{4}$, the circuit size for implementing $\sigma_i$ simplifies to $\bo{n}$ by noting that $\log 2m \leq \ell \leq 2^{2m}-1$. Let $m \coloneqq 2^{\lfloor{\log \left({\frac{\log n}{4}}\right)}\rfloor}$.
In Step 1, we decompose $\sigma$ as $\sigma = \sigma_{M-1} \circ \dots \circ \sigma_0$, where each $\sigma_i$ consists of at most $m$ pairwise disjoint transpositions and $M = \floor{\frac{\mathrm{size}(\sigma)}{m}} + \bo{\log (\min\cbra{m, \mathrm{size}(\sigma)})}$. Thus, the circuit size for implementing $\sigma$ is $\bo{\frac{n\,\mathrm{size}(\sigma)}{\log n} + n\log\min\{\,\mathrm{size}(\sigma), \log n\,\}}$.
\end{proof}

Step 1 decomposes $\sigma$ into a composition of $\cbra{\sigma_i}$ in time $\bo{d}$. Step 2 then generates the corresponding quantum circuit for each $\sigma_i$, with each requiring $\bo{n\,\mathrm{size}(\sigma_i)}$ classical preprocessing time. Hence, the total classical time complexity of \cref{lemma:sigma-restate} is $\bo{n\,\mathrm{size}(\sigma)}$.

\begin{remark}\label{remark:sigma}
    In particular, it is shown in the proof of \cref{lemma:sigma-restate} that, if the permutation $\sigma$ is composed of pairwise disjoint transpositions and $\mathrm{size}(\sigma) \leq \frac{\log n}{2}$ is a power of 2, then $\sigma$ can be implemented by a quantum circuit of size $\bo{n}$.
\end{remark}

Now, we are ready to derive the main theorem in this section. 
\begin{theorem}[Restatement of \cref{thm:withoutancillary}]
Any $n$-qubit $d$-sparse quantum state can be prepared by a quantum circuit of size $O(\frac{nd}{\log n} + n)$ without using ancillary qubits.
\end{theorem}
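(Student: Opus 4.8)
The plan is to follow the two-step template already sketched in the proof-techniques discussion: given $\mathcal{P} = \{(\alpha_i, q_i)\}$, first prepare a dense state supported on the $\ceil{\log d}$ low-order qubits, and then route its basis labels to the target strings $q_i$ by a single permutation $\sigma \in \mathfrak{S}_{2^n}$. Concretely, I would pick distinct labels $c_0, \dots, c_{d-1} \in \{0, 1, \dots, d-1\}$, prepare $\sum_i \alpha_i \ket{c_i}$ on the first $\ceil{\log d}$ qubits (leaving the remaining $n - \ceil{\log d}$ qubits in $\ket{0}$), and choose $\sigma$ so that $\sigma(c_i) = q_i$ for every $i$. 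The first step is an arbitrary state preparation on $\ceil{\log d}$ qubits, costing $O(2^{\ceil{\log d}}) = O(d)$ elementary gates; since $n/\log n \geq 1$ this is absorbed into the target bound. The entire burden therefore falls on implementing $\sigma$, for which I would invoke \cref{lemma:zaka}.

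The step I expect to be the main obstacle is getting the \emph{additive} term right. The generic estimate of \cref{lemma:zaka} only gives $O(\frac{\texttt{size}(\sigma)n}{\log n} + n\log\min\{\texttt{size}(\sigma), \log n\})$, and the trailing $n\log\log n$ is too large precisely when $d \approx \log n$ (there $\frac{dn}{\log n} = \Theta(n)$ while $n\log\log n = \omega(n)$). To reach the clean bound $O(\frac{dn}{\log n} + n)$ I must instead land in the special case of \cref{lemma:zaka}, i.e.\ realize $\sigma$ as a composition of \emph{disjoint transpositions} with $\texttt{size}(\sigma) = O(d)$. The naive choice $\sigma(i) = q_i$ need not have this form, since it can create long cycles whenever a target string $q_j$ happens to coincide with a source label $i$.

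The resolution exploits the freedom in choosing the labels $c_i$. Write $L = \{0, \dots, d-1\}$ and $T = \{q_0, \dots, q_{d-1}\}$, and split $L = A \sqcup C$, $T = B \sqcup C$ with $C = L \cap T$, $A = L \setminus T$, $B = T \setminus L$; since $|L| = |T| = d$ we have $|A| = |B|$. I would fix every element of $C$ and choose any bijection $f \colon A \to B$, defining $\sigma = \prod_{a \in A}(a, f(a))$. Because $A \cap B = \emptyset$ and $f$ is a bijection, these transpositions are pairwise disjoint; moreover $\sigma(L) = \sigma(A) \cup \sigma(C) = B \cup C = T$, and $\texttt{size}(\sigma) = 2|A| \leq 2d$. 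Setting $c_i = q_i$ for $q_i \in C$ and $c_i = f^{-1}(q_i)$ for $q_i \in B$ then produces distinct labels in $\{0, \dots, d-1\}$ with $\sigma(c_i) = q_i$, exactly as required. (No parity issue arises, since \cref{lemma:zaka} applies to every $\sigma \in \mathfrak{S}_{2^n}$, even or odd.)

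With $\sigma$ thus a product of $\tfrac{\texttt{size}(\sigma)}{2} \leq d$ disjoint transpositions, the special case of \cref{lemma:zaka} implements it with $O(\frac{\texttt{size}(\sigma)n}{\log n} + n) = O(\frac{dn}{\log n} + n)$ gates and no ancillas. Adding the $O(d) = O(\frac{dn}{\log n})$ cost of the dense-state step gives a total of $O(\frac{dn}{\log n} + n)$, which finishes the argument. The only part that needs careful verification is the disjointness of the transpositions and the relabeling bookkeeping in the previous paragraph; the rest is a direct combination of the two cost estimates.
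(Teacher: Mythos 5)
Your proposal is correct and matches the paper's proof in essence: the paper's Algorithm~\ref{alg:withoutancillary} constructs $\sigma$ exactly as you do, fixing the targets $q_i < d$ (your set $C$) and pairing each $q_i \geq d$ (your $B$) with an unused slot $k < d$ (your $A$) via disjoint transpositions, then invokes the disjoint-transposition case of \cref{lemma:zaka} plus an $O(d)$-size dense preparation on $\ceil{\log d}$ qubits. Your explicit remark about why the generic bound of \cref{lemma:zaka} (with its $n\log\log n$ term) would not suffice is a worthwhile clarification the paper leaves implicit, but the argument itself is the same.
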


\begin{proof}
    The algorithm is illustrated in \cref{alg:withoutancillary}. 
    The basic idea is to first prepare a $\ceil{\log d}$-qubit dense quantum state $\sum_i \alpha_i \ket{\sigma^{-1}(q_i)}$ where $\sigma$ is a specific permutation satisfying that $0 \leq \sigma^{-1}(q_i) \leq d - 1$, and then transform it to the target state $\sum_i \alpha_i \ket{q_i}$ by applying $\sigma$. The circuit $C_1$ prepares the $\ceil{\log d}$-qubit quantum state, and therefore can be implemented with circuit size $\bo{d}$. In the following, we turn to the implementation of $\sigma$.
    
    After the first two for loops of \cref{alg:withoutancillary}, we construct a permutation $\sigma$ composed of at most $d$ disjoint transpositions.  When $d = \om{\log n \log \log n}$, according to Lemma \ref{lemma:sigma-restate}, the permutation $\sigma$ can be implemented by a circuit of size $\bo{\frac{nd}{\log n} + n}$. 
    When $d = o\rbra{\log n \log \log n}$, we need to consider it a bit more carefully. Note that we can decompose $\sigma$ as $\sigma = \sigma' \circ \sigma_{M-1} \dots \circ \sigma_1 \circ \sigma_0$, where $M = \bo{\frac{d}{\log n}}$, each $\sigma_i$ is composed of pairwise disjoint transpositions such that $\mathrm{size}(\sigma_i) \leq \frac{\log n}{2}$ is a power of 2, and $\sigma'$ is a residual permutation composed of less than $\frac{\log n}{2}$ pairwise disjoint transpositions. However, we can always replace $\sigma'$ with $\sigma''$ which is made up by complementing $\sigma'$ with some \textit{irrelevant} transpositions such that $\mathrm{size}(\sigma'')$ is a power of 2. For example, $\rbra{i, j}$ can be an irrelevant transposition if $i,j \notin \cbra{q_i}_{0 \leq i \leq d - 1} \cup \cbra{0, \dots, d-1}$. We can always find enough {irrelevant} transpositions when $d = o\rbra{\log n \log \log n}$. Denote $\sigma'' \circ \sigma_{M-1} \dots \circ \sigma_1 \circ \sigma_0$ as $\widetilde{\sigma}$, it is easily to verify that $\sum_i \alpha_i \ket{\widetilde{\sigma}\rbra{\sigma^{-1}(q_i)}} = \sum_i \alpha_i \ket{{\sigma}\rbra{\sigma^{-1}(q_i)}} = \sum_i \alpha_i \ket{q_i}$. According to \cref{remark:sigma}, $\widetilde{\sigma}$ can be implemented by a quantum circuit of size $\bo{\frac{nd}{\log n} + n}$. 
    
    In conclusion, for any $d$-sparse quantum state, \cref{alg:withoutancillary} constructs a preparation circuit without ancillary qubits, of size $\bo{\frac{nd}{\log n} + n}$.
\end{proof}

\begin{algorithm}[htbp]
\caption{SQSP without Ancillary Qubits}\label{alg:withoutancillary} 
\begin{algorithmic}[1]
    \REQUIRE {$ \mathcal{P} = \cbra*{\rbra*{\alpha_i, q_i}}_{0 \leq i \leq d - 1}$.}
    \ENSURE {A quantum circuit for preparing state $\sum_i \alpha_i \ket{q_i}$.}

    \STATE Let $\mathrm{Flag[d]}$ be a Boolean vector with all elements initialized to $0$ ;
    \STATE Let $\sigma$ be an identity permutation ;
    \FOR{$(\alpha_i, q_i)$ in $\mathcal{P}$}
    \IF{$q_i < d$}
    \STATE $\mathrm{Flag}[q_i] \xleftarrow{} 1$;
    \ENDIF
    \ENDFOR
    \FOR{$(\alpha_i, q_i)$ in $\mathcal{P}$}
    \IF{$q_i \geq d$}
    \STATE find a index $k$ such that $\mathrm{Flag}[k] = 0$;
    \STATE $\mathrm{Flag}[k] \xleftarrow{} 1$;
    $\sigma \xleftarrow{} \sigma \circ (k, q_i)$;
    \ENDIF
    \ENDFOR

    \STATE Construct a quantum circuit $C_1$ for preparing the $\ceil{\log d}$-qubit state $\sum_{i=0}^{d-1} \alpha_i \ket{\sigma^{-1}(q_i)}$ on the $\ceil{\log d}$ least significant qubits ;

    \STATE Construct a quantum circuit $C_2$ for implementing $\sigma$ (or $\widetilde{\sigma}$, see the proof of \cref{thm:withoutancillary});

    \STATE Output the circuit $C_2C_1$ ; 
\end{algorithmic}

\end{algorithm}

\section{\texorpdfstring{SQSP with $m$ Ancillary Qubits}{SQSP with m Ancillary Qubits}}\label{sec:with}

\subsection{Algorithm for SQSP}
In this section, we focus on investigating the trade-off between the number of ancillary qubits and the circuit size of SQSP. To achieve this, we introduce the following new representation of integers, whose encoding efficiency lies between binary encoding and unary encoding.
Recall that the unary encoding $e_x$ of an integer $x$ sets the $x$-th bit to $1$, while all other bits are set to $0$. 

\begin{definition}\label{def:nr}
Given an $n$-bit integer $x = \sum_{i=0}^{n-1}2^i x_i$ and another integer $r > 0$ such that $r$ divides $n$, define $x(j, k)$ for any integers $0 \leq j < k < n$ as $x(j, k) \coloneqq \sum_{i=j}^{k-1} 2^{i-j}x_i$. The $(n, r)$-unary encoding of the integer $x$ is expressed as $e_{x(n-r, n)} e_{x(n-2r, n-r)} \dots e_{x(0, r)}$. In particular, when $r = n$, the $(n, n)$-unary encoding corresponds to the standard unary encoding.
\end{definition}

For example, when $n = 8$, $r = 2$, and $x = 11011000$, the $(8, 2)$-unary encoding of $x$ is given by $e_{11}e_{01}e_{10}e_{00} = 0001\ 0100\ 0010\ 1000$.

\begin{remark}
    For convenience in presentation, we assume that $ r $ divides $ n $. However, even if $ r $ does not divide $ n $, this will not affect the correctness of any subsequent conclusions.
\end{remark}

In the following lemma, we show that any sparse quantum state can be efficiently prepared using the $(n, r)$-unary encoding.

\begin{lemma}[Restatement of \cref{lemma:nr-encoding}]
    Given two positive integers $n$ and $r$ such that $r$ divides $n$, and a set of size $d$ $\cbra{(q_i, \alpha_i)}_{0 \leq i \leq d-1}$ such that $\sum_i \abs{\alpha_i}^2 = 1$, $q_i \in \cbra{0, 1}^n$ for all $0 \leq i \leq d-1$ and $q_i \neq q_j$ for any $i \neq j$, there exists a quantum circuit preparing the following $\frac{n2^r}{r}$-qubit state
    \begin{align}
        \sum_{i=0}^{d-1} \alpha_i \ket{e_{q_i(n-r, n)}} \ket{e_{q_i(n-2r, n-r)}} \dots \ket{e_{q_i(0, r)}}
        = \sum_{i=0}^{d-1} \alpha_i \rbra*{\bigotimes_{j=0}^{\frac{n}{r}-1} \ket{e_{q_i(jr, (j+1)r)}}}.
    \end{align}
    The circuit is of size $\bo{\frac{nd}{r}}$, using one ancillary qubits.
\end{lemma}
\begin{proof}
    See \cref{app:lemma-nr}.
\end{proof}

The following lemma transforms the unary encoding to the binary encoding. 

\begin{lemma}\label{lemma:unary2binary}
    There exists a quantum circuit that converts the unary encoding of $n$-bit integers to the binary encoding, that is, achieving the following transformation for any $0 \leq i \leq 2^n-1$:
    \begin{align}
        \ket{e_i}\ket{0}^n \xrightarrow{} \ket{0}^{\otimes 2^n} \ket{i}
    \end{align}
    The circuit has a size of $\bo{n2^n}$.
\end{lemma}
\begin{proof}
    See \cref{app:lemma-unary2binary}.
\end{proof}

With the above two key lemmas established, we obtain the following theorem for a wide parameter regime.

\begin{theorem}\label{thm:withanc}
   For any $d = \om{n\log n}$ and any $m \in \sbra{\om{n^2}, \bo{\frac{nd}{\log nd} + n}}$, any $n$-qubit $d$-sparse quantum state can be prepared by a quantum circuit of size $\bo{\frac{nd}{\log (m+n)} + n}$ using $m$ ancillary qubits.
\end{theorem}

\begin{proof}
    The algorithm is presented in \cref{alg:withanc}. Let $r \coloneqq \floor{\log \frac{m}{n}}$ such that $\frac{n2^r}{r} \leq m$. In \cref{alg:withanc}, we use an ancillary register $\texttt{M}$ consisting of $\frac{n2^r}{r}$ qubits, which is further divided into $\frac{n}{r}$ sub-registers $\cbra{\texttt{M}_j}_{0\leq j \leq \frac{n}{r}-1}$ as in \cref{lemma:nr-encoding}. The target state is prepared in another register $\texttt{R}$, which is also divided into $\frac{n}{r}$ sub-registers $\cbra{\texttt{R}_j}_{0\leq j \leq \frac{n}{r}-1}$, each consisting of $r$ qubits. 
    
    Let us first show the correctness of \cref{alg:withanc}. The initial state is 
    \begin{align}
        \rbra*{\bigotimes_{j=0}^{\frac{n}{r}-1}\ket{0}^{\otimes 2^r}_{\texttt{M}_j}} \otimes \rbra*{\bigotimes_{j=0}^{\frac{n}{r}-1}\ket{0}^{\otimes r}_{\texttt{R}_j}}.
    \end{align}
    After Step 2, we have 
    \begin{align}
        \sum_{i=0}^{d-1} \alpha_i \rbra*{\bigotimes_{j=0}^{\frac{n}{r}-1} \ket{e_{q_i(jr, (j+1)r)}}_{\texttt{M}_j}}  \otimes \rbra*{\bigotimes_{j=0}^{\frac{n}{r}-1}\ket{0}^{\otimes r}_{\texttt{R}_j}}.
    \end{align}
    Note that each $\ket{e_{q_i(jr, (j+1)r)}}_{\texttt{M}_j}$ is the unary encoding of the $r$-bit integer ${q_i(jr, (j+1)r)}$. Therefore, Step 3 is to convert $\ket{e_{q_i(jr, (j+1)r)}}_{\texttt{M}_j} \ket{0}^{\otimes r}_{\texttt{R}_j}$ to $\ket{0}^{\otimes 2^r}_{\texttt{M}_j} \ket{{q_i(jr, (j+1)r)}}_{\texttt{R}_j}$ for all $0 \leq j \leq \frac{n}{r}-1$ according to \cref{lemma:unary2binary}. 
    After Step 3, omitting the ancillary qubits, the state stored in the register $\texttt{R}$ now is 
    \begin{align}
        \sum_{i=0}^{d-1} \alpha_i \rbra*{\bigotimes_{j=0}^{\frac{n}{r}-1} \ket{q_i(jr, (j+1)r)}_{\texttt{R}_j}} = \sum_{i=0}^{d-1} \alpha_i \ket{q_i}_\texttt{R},
    \end{align}
    as desired.

    Finally, we analyze the size of the circuit. According to \cref{lemma:nr-encoding} and \cref{lemma:unary2binary}, the size of Step 2 and Step 3 is $\bo{\frac{nd}{r}}$ and $\frac{n}{r} * \bo{r2^r}$, respectively. Therefore, the circuit size is $\bo{\frac{nd}{\log m - \log n} + m}$, which is $\bo{\frac{nd}{\log (m+n)} + n}$ when  $m \in \sbra{\om{n^2}, \bo{\frac{nd}{\log nd} + n}}$.
\end{proof}

\begin{algorithm}[htbp]
    \caption{SQSP with $m$ Ancillary Qubits}\label{alg:withanc} 
    \begin{algorithmic}[1]
    \REQUIRE $ \mathcal{P} = \cbra*{\rbra*{\alpha_i, q_i}}_{0\leq i \leq d-1}$ such that $d = \om{n\log n}$  and an integer $m \in \sbra{\om{n^2}, \bo{\frac{nd}{\log d}}}$.
    \ENSURE A quantum circuit for preparing state $\sum_{i=0}^{d-1} \alpha_i \ket{q_i}$ using at most $m$ ancillary qubits.
    
    \STATE Let $r \coloneqq \floor{\log \frac{m}{n}}$, and prepare the initial state $ \rbra*{\bigotimes_{j=0}^{\frac{n}{r}-1}\ket{0}^{\otimes 2^r}_{\texttt{M}_j}} \otimes \rbra*{\bigotimes_{j=0}^{\frac{n}{r}-1}\ket{0}^{\otimes r}_{\texttt{R}_j}}$;
    \STATE Prepare state $\sum_{i=0}^{d-1} \alpha_i \rbra*{\bigotimes_{j=0}^{\frac{n}{r}-1} \ket{e_{q_i(jr, (j+1)r)}}_{\texttt{M}_j}}$ on the first $\frac{n2^r}{r}$ qubits according to \cref{lemma:nr-encoding};
    \STATE Convert $\ket{e_{q_i(jr, (j+1)r)}}_{\texttt{M}_j}$ to $\ket{q_i(jr, (j+1)r)}$ on the $jr$-th to $(j+1)r - 1$-th qubits of the second register for all $j$.
    \end{algorithmic}
\end{algorithm}

The classical time to generate the quantum circuit in Steps 2 and 3 is $\bo{\frac{nd}{r}}$ and $\bo{n2^r}$, respectively. Therefore, the total classical time complexity of \cref{alg:withanc} is $\bo{\frac{nd}{\log (m+n)} + n}$, matching the circuit size.

Combining \cref{thm:withoutancillary} with \cref{thm:withanc}, we achieve the trade-off between the number of the ancillary qubits and the circuit size for any $d \geq 1$ and $m \in \bo{\frac{nd}{\log nd}+n}$.
\begin{theorem}[Restatement of \cref{thm:tradeoff}]
   For any $m \in \bo{\frac{nd}{\log nd}+n}$, any $n$-qubit $d$-sparse quantum state can be prepared by a quantum circuit of size $\bo{\frac{nd}{\log (m+n)} + n}$ using $m$ ancillary qubits.
\end{theorem}

\begin{proof}
    When $d = \om{n\log d}$ and $m \in \om{n^2}$, we resort to \cref{thm:withanc}, and the size is $\bo{\frac{nd}{\log(m+n)} + n}$, satisfying the requirement of the theorem. Otherwise, we have $d = o\rbra{n\log n}$ (implying $m \in o\rbra{n^2}$) or $m \in o\rbra{n^2}$, and we turn to the algorithm without ancillary qubits, i.e., \cref{thm:withoutancillary}, and the circuit size is $\bo{\frac{nd}{\log n} + n}$, also satisfying the requirement of the theorem. The proof is completed.
\end{proof}

\subsection{Lower Bound on the Circuit Size }

In this section, we prove  lower bounds on the circuit size  for preparing sparse quantum states. The following proof is based on the counting argument. To overcome the challenge posed by the existence of an infinite number of single-qubit quantum gates, we discretize the parameters of these gates. For the sake of clarity, we first establish the lower bound on the circuit size without any assumption, and then proceed to the case with reasonable assumptions.

\begin{theorem}[Restatement of \cref{thm:lb1}]\label{res:thm:lb1}
Given $m$ ancillary qubits available, there exist $n$-qubit $d$-sparse quantum states  such that any algorithm to prepare them requires $\om{\frac{nd}{\log(m + n) + \log d} + n}$ elementary quantum gates.
\end{theorem}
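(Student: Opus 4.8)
Write $N = n+m$ and $L \coloneqq \log(m+n) + \log d$, and assume $d \geq 2$ (the case $d=1$ is a single basis state, handled by the $\om{n}$ term below). Since $L \geq \log d \geq 1$, one has
$\frac{dn}{L} = \frac{d(n-\log d)}{L} + \frac{d\log d}{L} \leq \frac{d(n-\log d)}{L} + d$.
Hence it suffices to exhibit hard $d$-sparse states for which $s = \Omega\!\left(\frac{d(n-\log d)}{L}\right)$ (the main counting bound), together with the easy bounds $s = \om{d}$ and $s = \om{n}$; summing these and applying the displayed inequality yields $s = \om{\frac{dn}{L} + n}$. My plan is to obtain the first bound by a counting argument with parameter discretization, exactly as flagged in the Proof Techniques, and to dispatch the other two by standard dimension and light-cone arguments.

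\textbf{Main counting bound.} The hard family I would use is $\mathcal{F} = \cbra{\psi_T \coloneqq \tfrac{1}{\sqrt d}\sum_{x \in T}\ket{x} : T \subseteq \cbra{0,1}^n,\ \abs{T} = d}$, so $\abs{\mathcal F} = \binom{2^n}{d}$ and $\log\abs{\mathcal F} \geq d(n - \log d)$. Two distinct supports share at most $d-1$ elements, so $\abs{\braket{\psi_T|\psi_{T'}}} \leq 1 - 1/d$ and the states are pairwise Euclidean-separated by at least $\sqrt{2/d}$; set $\epsilon \coloneqq \tfrac{1}{2}\sqrt{2/d}$ so the radius-$\epsilon$ balls are disjoint. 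Now discretize: fix $\delta \coloneqq \epsilon/s$ and replace every single-qubit gate in a size-$s$ circuit by its nearest element of a $\delta$-net of the single-qubit unitaries, which has size $(C/\delta)^{O(1)}$; leaving CNOTs untouched, telescoping gives total output error $\leq s\delta = \epsilon$. Thus each $\psi_T$ that is \emph{exactly} preparable with $s$ elementary gates lies within $\epsilon$ of the output of some \emph{discretized} circuit, and since the $\epsilon$-balls are disjoint, distinct $T$ force distinct discretized circuits. Counting discretized circuits (each gate: $\leq N$ qubit choices times $(C/\delta)^{O(1)}$ types for single-qubit gates, or $\leq N^2$ for CNOT) gives at most $\big(2N^2(C/\delta)^{O(1)}\big)^{s}$ of them, whence
\[
d(n-\log d) \;\leq\; \log\abs{\mathcal F} \;\leq\; s\cdot O\!\left(\log N + \log(1/\delta)\right) \;=\; s\cdot O\!\left(\log(m+n) + \log s + \log d\right),
\]
using $1/\delta = s\sqrt{2d}$.

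\textbf{Resolving the self-reference.} The inequality above still contains $\log s$ on the right, and the work is to turn it into the clean denominator $L$. I would argue by cases on $s$. If $s \leq (Nd)^2$ then $\log s = O(\log(m+n)+\log d) = O(L)$, so the right-hand side is $s\cdot O(L)$ and rearranging gives $s = \Omega\!\left(\frac{d(n-\log d)}{L}\right)$. If instead $s > (Nd)^2 \geq nd \geq d(n-\log d) \geq \frac{d(n-\log d)}{L}$ (as $L \geq 1$), the desired bound holds trivially. Either way $s = \Omega\!\left(\frac{d(n-\log d)}{L}\right)$. For the auxiliary bounds, I would choose the supports $T$ to all contain the string $1^n$: then every data qubit must be acted on by at least one gate (an untouched qubit stays in $\ket 0$, so no basis string with a $1$ there could appear), and since each elementary gate touches at most two qubits, $s = \om{n}$. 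Finally, $d$-sparse states with fixed support form a real manifold of dimension $2d-2$, while the image of size-$s$ circuits lies in a finite union of smooth images of $O(s)$-dimensional parameter spaces; comparing dimensions gives the standard bound $s = \om{d}$.

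\textbf{Main obstacle.} The genuinely delicate point is the discretization step rather than the combinatorics: one must simultaneously (i) pick the separation $\epsilon = \Theta(1/\sqrt d)$ dictated by uniform-amplitude states, (ii) propagate a per-gate net precision $\delta = \epsilon/s$ through the triangle inequality without the error blowing up, and (iii) control the resulting $\log(1/\delta) = O(\log s + \log d)$ term so that the final denominator is exactly $\log(m+n) + \log d$ and not something like $\log(m+n) + n$. It is precisely the unavoidable $\log s$ (equivalently, the $\Theta(\log d)$ separation together with the a~priori polynomial cutoff $s \leq (Nd)^2$) that produces the extra $+\log d$ in the denominator; removing it is exactly what the stronger \cref{thm:lb2} achieves under the assumption that only $\bo{d}$ arbitrary-angle rotations are used, which caps the number of continuous parameters independently of $s$.
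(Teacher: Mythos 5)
Your proposal is correct and follows essentially the same route as the paper: the same hard family of uniform superpositions over $d$-element supports, the same $\sqrt{2/d}$ pairwise separation, discretization of the rotation angles followed by a counting argument against $\log\abs{\mathcal{F}} \geq d(n-\log d)$, and the same two side bounds $\om{d}$ (dimension) and $\om{n}$ (every data qubit must be touched). The one place you genuinely diverge is the point you yourself flag as the main obstacle, namely how the net precision is anchored. The paper fixes $\delta = \sqrt{1/(4c^2d^3n^2)}$ once and for all, using the previously known worst-case upper bound $T \leq cdn$ (from the earlier $O(dn)$-size constructions), so $\log(1/\delta) = O(\log d + \log n)$ and no self-reference ever arises; you instead let $\delta = \epsilon/s$ depend on the unknown size $s$ and then eliminate the resulting $\log s$ term by the case split at $s \leq (Nd)^2$. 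Both resolutions are valid and yield the same denominator $\log(m+n)+\log d$; yours has the small advantage of being self-contained (it never invokes the prior upper bound), while the paper's is arithmetically cleaner. One pedantic repair: with $\epsilon = \tfrac{1}{2}\sqrt{2/d}$ the closed $\epsilon$-balls around two states at the minimum distance $\sqrt{2/d}$ can touch, so the assignment of discretized circuits to supports is not quite forced to be injective; take $\epsilon$ strictly below half the separation (the paper effectively uses a quarter of it, comparing $\sqrt{1/4d}$ against $\sqrt{2/d}$) and the argument closes.
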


\begin{proof}
Consider the universal quantum gate set $G \coloneqq \cbra{R_x(\theta), R_y(\theta), R_z(\theta), CNOT}$, where $\theta \in [0, 2\pi)$ is a parameter, and the set of quantum states
\begin{align}
    \mathcal{D}_d \coloneqq \left\{ \ket{\psi_\mathcal{S}} \coloneqq \frac{1}{\sqrt{d}} \sum_{i \in \mathcal{S}} \ket{i} \mid \mathcal{S} \subset \{0, \dots, 2^n-1\}, \abs{\mathcal{S}} = d \right\}.
\end{align}
For any $\ket{\psi}, \ket{\phi} \in \mathcal{D}_d$ such that $\ket{\psi} \neq \ket{\phi}$, we have $\Abs{\ket{\psi} - \ket{\phi}} \geq \sqrt{\frac{2}{d}}$. Suppose the number of elementary quantum gates required to prepare a quantum state $\ket{\psi} \in \mathcal{D}_d$ in the worst case is $T$. Note that  $ T \leq cdn $ for some constant $ c > 0 $ \cite{gleinig2021efficient, malvetti2021quantum}.

Now we consider a new set of quantum gates $\widetilde{G}$ and a new quantum state preparation task. The new set $\widetilde{G}$ is constructed from $G$ by restricting the precision of the rotation angles of the single-qubit gates to $\delta \coloneqq \sqrt{\frac{1}{4c^2d^3n^2}}$. For instance, given a single-qubit gate $R_x(\theta)$ from $G$ with a parameter $\theta$,  we  set $\widetilde{\theta} = \delta \floor{\frac{\theta}{\delta}}$, acquire $R_x(\widetilde{\theta})$ in $\widetilde{G}$, and approximate $R_x(\theta)$ with $R_x(\widetilde{\theta})$. The new quantum state preparation task is to construct a circuit with $\widetilde{G}$ to prepare a quantum state $\ket{\widetilde{\psi}}$ for any given $\ket{\psi} \in \mathcal{D}_d$ such that $\|\ket{\psi} - \ket{\widetilde{\psi}}\| \leq \sqrt{\frac{1}{4d}}$. Note that $\|\ket{\phi} - \ket{\widetilde{\psi}}\| > \sqrt{\frac{1}{4d}}$ for any $\ket{\phi} \in \mathcal{D}_d$ such that $\ket{\psi} \neq \ket{\phi}$.

The circuit to prepare $\ket{\widetilde{\psi}}$ with $\widetilde{G}$ can be constructed as follows: first, we construct a circuit $U = U_{T-1} \dots U_0$ with $G$ to prepare the quantum state $\ket{\psi}$ exactly, where $U_i \in G$ or $U_i = I$; then, we replace each single-qubit gate $R_l(\theta)$ $(l\in\{x,y,z\})$ in $U$ with $R_l(\widetilde{\theta})$ from $\widetilde{G}$. It can be verified that this new circuit $\widetilde{U}$ prepare an approximate quantum state $\ket{\widetilde{\psi}}$ for $\ket{\psi}$, satisfying $\|\ket{\psi} - \ket{\widetilde{\psi}}\| \leq \sqrt{\frac{1}{4d}}$. This demonstrates the feasibility of achieving the new quantum state preparation task with $\widetilde{G}$. Note that the discussion also reduces the task of preparing $\ket{\widetilde{\psi}}$ with $\widetilde{G}$ to preparing $\ket{\psi}$ with $G$. Therefore, the lower bound on the circuit size of the former immediately implies the lower bound on the circuit size of the latter.

Note that $\abs{\widetilde{G}} = \frac{6\pi}{\delta} + 1$. For each $U_i$, there are a total of $\rbra{\frac{6\pi}{\delta} + 2}$ choices of quantum gates (including the identity gate $I$) to select from. The position where the quantum gates act has at most  $\rbra{m + n}^2$ choices. Therefore, by the counting argument, we have
    \begin{align}
        \rbra*{\rbra{m + n}^2 \cdot \rbra{\frac{6\pi}{\delta} + 2}}^T \geq \abs{\mathcal{D}_d} \geq \rbra{\frac{2^n}{d}}^d
    \end{align}
Thus, we have $T = \om{\frac{dn - d\log d}{\log(m + n) + \log d}}$. 

In addition, we have $T = \om{d}$ from the dimensionality argument. Also,  we consider the preparation of the state $\ket{1}^{\otimes n}$ from the initial state $\ket{0}^{\otimes n}$ with arbitrary single-qubit gates and CNOT. Undoubtedly, we have to access all the qubits to prepare $\ket{1}^{\otimes n}$.  Therefore, we get a lower bound $\om{n}$ on the circuit size.

Combining the argument above, we conclude that
\begin{align}
    T = \om*{\frac{dn - d\log d}{\log(m + n) + \log d} + d + n} = \om*{\frac{nd}{\log(m + n) + \log d} + n}.
\end{align}
\end{proof}

\begin{theorem}[Restatement of \cref{thm:lb2}]
    Suppose $d \leq 2^{\epsilon n}$ for a sufficiently small constant $\epsilon \in \rbra{0, 1}$, and given $m$ ancillary qubits, if an algorithm $\mathcal{A}$ for preparing $n$-qubit $d$-sparse quantum states satisfies the following conditions:
    \begin{enumerate}
        \item $\mathcal{A}$ uses at most $\bo{d}$ single-qubit rotation gates with arbitrary angles,
        \item all other single-qubit gates amount to a total of $O(n)$ types,
    \end{enumerate}
 then $\mathcal{A}$ requires $\Omega(\frac{nd}{\log(m+n)} + n)$ elementary quantum gates in the worst case.
\end{theorem}

\begin{proof}
    Following the same discussion as in the proof of \cref{res:thm:lb1}, it remains to estimate the number of quantum circuits with  $T$ gates and containing only $\bo{d}$ single-qubit rotation gates with arbitrary angles. We distinguish $ \{R_x(\theta), R_y(\theta), R_x(\theta)\} $ from other single-qubit gates in the circuit, so there are $\bo{n}$ types of single-qubit gates in total. First, we estimate the number of circuit templates without specifying the parameters of $ \{R_x(\theta), R_y(\theta), R_x(\theta)\} $ in the circuit, and then we specify the angles of single-qubit rotation gates. Therefore, we have
\begin{align}
((m + n)^2 \cdot \bo{n})^T \cdot (\frac{6\pi}{\delta})^{\bo{d}} \geq |\mathcal{D}_d| \geq (\frac{2^n}{d})^d.
\end{align}
Hence, we have $\bo{\log (m+n)}\cdot T + \bo{d \log nd} \geq nd - d\log d$, implying $T= \Omega(\frac{nd}{\log(m + n)})$ when $d \leq 2^{\epsilon n}$ for a sufficiently small constant $\epsilon \in \rbra{0, 1}$. Similarly, combining $T = \Omega(d + n)$, we obtain the lower bound on the circuit size as $\Omega(\frac{nd}{\log(m+n)} + n)$.
\end{proof}

\section{Conclusion}\label{sec:conclusion}

In this paper, we focus on optimizing the circuit size of preparing sparse quantum states in two scenarios: without  and with ancillary qubits. First, we have demonstrated that, without ancillary qubits, any $n$-qubit $d$-sparse quantum state can be prepared by a circuit of size $O(\frac{nd}{\log n} + n)$. Second,  we have proved that with $m \in \bo{\frac{nd}{\log nd} + n}$ ancillary qubits available, the circuit size is $O(\frac{nd}{\log (m+n)} + n)$. Finally, we have established a matching lower bound  $\om{\frac{nd}{\log(m + n)} + n}$ on the circuit size for the case of $m$ ancillary qubits available when $d \leq 2^{\delta n}$ for a sufficiently small constant $\delta \in \rbra{0, 1}$ under reasonable assumptions. Additionally, we have provided a slightly weaker lower bound of $\om{\frac{nd}{\log(m + n) + \log d} + n}$ without any assumptions. Putting the above results together, we have obtained the optimal bound $\ta{\frac{nd}{\log nd} + n}$ on the circuit size in the case of unlimited  number of ancillary qubits  available.



\bibliography{lipics-v2021-sample-article}

\appendix

\section{\texorpdfstring{Proof of \cref{lemma:nr-encoding}}{Proof of Lemma 9 }}\label{app:lemma-nr}

\begin{lemma}[Restatement of \cref{lemma:nr-encoding}]
    Given two positive integers $n$ and $r$ such that $r$ divides $n$, and a set of size $d$ $\cbra{(q_i, \alpha_i)}_{0 \leq i \leq d-1}$ such that $\sum_i \abs{\alpha_i}^2 = 1$, $q_i \in \cbra{0, 1}^n$ for all $0 \leq i \leq d-1$ and $q_i \neq q_j$ for any $i \neq j$, there exists a quantum circuit preparing the following $\frac{n2^r}{r}$-qubit state
    \begin{align}
        \sum_{i=0}^{d-1} \alpha_i \ket{e_{q_i(n-r, n)}} \ket{e_{q_i(n-2r, n-r)}} \dots \ket{e_{q_i(0, r)}}
        = \sum_{i=0}^{d-1} \alpha_i \rbra*{\bigotimes_{j=0}^{\frac{n}{r}-1} \ket{e_{q_i(jr, (j+1)r)}}}.
    \end{align}
    The circuit is of size $\bo{\frac{nd}{r}}$, using one ancillary qubits.
\end{lemma}

\begin{proof}
    The algorithm is presented in \cref{alg:unary}. The basic idea is similar to that of \cite{de2020circuit, de2022double, mao2024towards}, which aimed to prepare a sparse quantum state directly using binary encoding. Let $\ket{0}_{\texttt{anc}}$ be the ancillary qubit, and let $\texttt{M}$ denote the $\frac{n2^r}{r}$-qubit working register, which is further divided into $\frac{n}{r}$ sub-register $\cbra{\texttt{M}_j}_{0\leq j \leq \frac{n}{r}-1}$, each consisted of $2^r$ qubits. For any $\alpha \in \mathbb{C}$ and $\beta \in \mathbb{R}_+$ such that $0 \leq \abs{\alpha} \leq \beta \leq 1$, define one-qubit gate $G(\alpha, \beta)$ as follows:
    \begin{align}
        G(\alpha, \beta) \coloneqq \frac{1}{\beta}
        \begin{pmatrix}
       \sqrt{\beta^2 - \abs{\alpha}^2}     & \alpha \\
    \alpha^\dagger & \sqrt{\beta^2 - \abs{\alpha}^2}
        \end{pmatrix}.
    \end{align}

    The algorithm proceeds in an iterative manner, and we argue that before the beginning of the iteration (corresponding to $i = -1$) and at the end of the $i$-th iteration for all $0 \leq i \leq d-1$, the state of the whole register is as follows:
    \begin{align}
        \ket{0}_{\texttt{anc}} \otimes \sum_{k=0}^{i}\alpha_k\rbra*{\bigotimes_{j=0}^{\frac{n}{r}-1}\ket{e_{q_k(jr, (j+1)r}}_{\texttt{M}_j}} + \sqrt{\sum_{k > i} \abs{\alpha_k}^2}\ket{1}_{\texttt{anc}} \otimes \rbra*{\bigotimes_{j=0}^{\frac{n}{r}-1}\ket{0}^{\otimes 2^r}_{\texttt{M}_j}}. \label{equ:unary}
    \end{align}
    We prove \cref{equ:unary} by induction. Before the beginning of the iteration, the initial state is $\ket{1}_{\texttt{anc}} \otimes \rbra*{\bigotimes_{j=0}^{\frac{n}{r}-1}\ket{0}^{\otimes 2^r}_{\texttt{M}_j}}$, satisfying \cref{equ:unary} for $i = -1$.
    
    In the $(i+1)$-the iteration, the left part of \cref{equ:unary}, i.e., the partial state with the ancillary qubit being $\ket{0}$, remains unchanged. After Step 3, the right part of \cref{equ:unary} is transformed to 
    \begin{align}
        \sqrt{\sum_{k > i} \abs{\alpha_k}^2}\ket{1}_{\texttt{anc}} \otimes \rbra*{\bigotimes_{j=0}^{\frac{n}{r}-1}\ket{e_{q_{i+1}(jr, (j+1)r}}_{\texttt{M}_j}},
    \end{align}
    which is transformed to, after Step 4,
    \begin{align}\label{equ:unary2}
        \alpha_{i+1}\ket{0}_{\texttt{anc}} \otimes \rbra*{\bigotimes_{j=0}^{\frac{n}{r}-1}\ket{e_{q_{i+1}(jr, (j+1)r}}_{\texttt{M}_j}} + \sqrt{\sum_{k > i+1} \abs{\alpha_k}^2}\ket{1}_{\texttt{anc}} \otimes \rbra*{\bigotimes_{j=0}^{\frac{n}{r}-1}\ket{e_{q_{i+1}(jr, (j+1)r}}_{\texttt{M}_j}}.
    \end{align}
    After Step 5, the second part of \cref{equ:unary2} is restored to $\sqrt{\sum_{k > i+1} \abs{\alpha_k}^2}\ket{1}_{\texttt{anc}} \otimes \rbra*{\bigotimes_{j=0}^{\frac{n}{r}-1}\ket{0}^{\otimes 2^r}_{\texttt{M}_j}}$. Therefore, we get the desired state as \cref{equ:unary} for $i \xleftarrow{} i + 1$.

    Finally, we turn to the circuit size of the algorithm. In each iteration, in addition to a total of $\frac{2n}{r}$ CNOT gates in Step 3 and Step 5, one $G(\alpha, \beta)$ gate conditioned on $\frac{n}{r}$ qubits are applied, which can be implemented with two one-qubit gates and one $(\frac{n}{r}+1)$-qubit Toffoli gate~\cite{barenco1995elementary}. Therefore, the circuit size of the circuit is $\bo{\frac{nd}{r}}$.
\end{proof}

\begin{algorithm}[htbp]
    \caption{SQSP using $(n, r)$-unary encoding}\label{alg:unary} 
    \begin{algorithmic}[1]
    \REQUIRE $ \mathcal{P} = \cbra*{\rbra*{\alpha_i, q_i}}_{0\leq i \leq d-1}$.
    \ENSURE A quantum circuit for preparing state $\sum_{i=0}^{d-1} \alpha_i \rbra*{\bigotimes_{j=0}^{\frac{n}{r}-1} \ket{e_{q_i(jr, (j+1)r)}}}$.

   \STATE  Prepare the initial state $\ket{1}_{\texttt{anc}} \otimes \rbra*{\bigotimes_{j=0}^{\frac{n}{r}-1}\ket{0}^{\otimes 2^r}_{\texttt{M}_j}}$;
   \FOR{$i = 0$ to $d-1$}
    \STATE Apply a total of $\frac{n}{r}$ CNOT gates to transform $\bigotimes_{j=0}^{\frac{n}{r}-1}\ket{0}^{\otimes 2^r}_{\texttt{M}_j}$ to $\bigotimes_{j=0}^{\frac{n}{r}-1}\ket{e_{q_i(jr, (j+1)r}}_{\texttt{M}_j}$ conditioned on the state of the ancillary qubit being $\ket{1}$; \label{step:cnot}
    \STATE Apply $G(\alpha_i, \sum_{j=i}^{d-1}\abs{\alpha_j}^2)$ on the ancillary qubit conditioned on the state of $\texttt{M}_j$ being $\ket{e_{q_i(jr, (j+1)r}}$ for all $0 \leq j \leq \frac{n}{r}-1$;
    \STATE Repeat Step~\ref{step:cnot};
    \ENDFOR
    \end{algorithmic}
\end{algorithm}

The following is a concrete example illustrating the workflow of \cref{alg:unary}.
\begin{example}
    Let $d = 3$, $n = 8$, $r = 2$, with $q_0 = 11011000$, $q_1 = 00011001$, and $q_2 = 10001111$. The initial state is $\ket{1} \ket{0}^{\otimes 16}$. In the first iteration, the state evolves as follows: 
    \begin{align}
        \ket{1}\ket{0}^{\otimes 16} 
        & \xrightarrow{} \ket{1} \ket{0001\ 0100\ 0010\ 1000} \\
        & \xrightarrow{} \sqrt{1-\abs{\alpha_0}^2} \ket{1} \ket{0001\ 0100\ 0010\ 1000} + \alpha_0 \ket{0}\ket{0001\ 0100\ 0010\ 1000}\\
        & \xrightarrow{} \sqrt{1-\abs{\alpha_0}^2} \ket{1} \ket{0}^{\otimes 16} + \alpha_0 \ket{0}\ket{0001\ 0100\ 0010\ 1000}.\label{equ:examp-nr}
    \end{align}
    In the second iteration, the state is further transformed:
    \begin{align}
    \cref{equ:examp-nr} 
    & \xrightarrow{} 
    \sqrt{1 - \abs{\alpha_0}^2} \ket{1} \ket{1000\ 0100\ 0010\ 0100} \notag \\
    & \phantom{\xrightarrow{}} 
    \ + \alpha_0 \ket{0} \ket{0001\ 0100\ 0010\ 1000} \\
    & \xrightarrow{} 
    \abs{\alpha_2} \ket{1} \ket{1000\ 0100\ 0010\ 0100} \notag \\
    & \phantom{\xrightarrow{}} 
    \ + \alpha_1 \ket{0} \ket{1000\ 0100\ 0010\ 0100} \notag \\
    & \phantom{\xrightarrow{}} 
    \ + \alpha_0 \ket{0} \ket{0001\ 0100\ 0010\ 1000} \\
    & \xrightarrow{} 
    \abs{\alpha_2} \ket{1} \ket{0}^{\otimes 16} \notag \\
    & \phantom{\xrightarrow{}} 
    \ + \alpha_1 \ket{0} \ket{1000\ 0100\ 0010\ 0100} \notag \\
    & \phantom{\xrightarrow{}} 
    \ + \alpha_0 \ket{0} \ket{0001\ 0100\ 0010\ 1000}.
\end{align}
The same argument applies to the third iteration.
\end{example}

\section{\texorpdfstring{Proof of \cref{lemma:unary2binary}}{Proof of Lemma 18}}\label{app:lemma-unary2binary}
\begin{lemma}[Restatement of \cref{lemma:unary2binary}]
    There exists a quantum circuit that converts the unary encoding of $n$-bit integers to the binary encoding, that is, achieving the following transformation for any $0 \leq i \leq 2^n-1$:
    \begin{align}
        \ket{e_i}\ket{0}^n \xrightarrow{} \ket{0}^{\otimes 2^n} \ket{i}
    \end{align}
    The circuit has a size of $\bo{n2^n}$.
\end{lemma}


\begin{algorithm}[htbp]
    \caption{A Quantum Circuit Converting the Unary Encoding to the Binary Encoding}\label{alg:unary2binary} 
    \begin{algorithmic}[1]
    \REQUIRE An integer $n > 0$.
    \ENSURE A quantum circuit performing $\ket{e_i}\ket{0}^{\otimes n} \xrightarrow{} \ket{0}^{\otimes 2^n} \ket{i}$.
   \FOR{$i = 0$ to $2^n-1$}
    \STATE Transform the second register from $\ket{0}^{\otimes n}$ to $\ket{i}$ conditioned on the state of the $i$-th qubit of the first register being $\ket{1}$ using at most $n$ CNOT gates;
    \STATE Apply a $(n+1)$-qubit Toffoli gate conditioned on the state of the second register being $\ket{i}$ and targeted on the  the $i$-th qubit of the first register;
    \ENDFOR
    \end{algorithmic}
\end{algorithm}

\begin{proof}
    The algorithm is presented in \cref{alg:unary2binary}. It is direct to verify the correctness of the algorithm. In each iteration, at most $n$ CNOT gates and one $n+1$-qubit Toffoli gate are applied. Therefore, the circuit size of the circuit is $\bo{n2^n}$.
\end{proof}

\end{document}